\lstdefinestyle{Matlab}{
    language        = matlab,
    frame           = lines, 
    basicstyle      = \footnotesize,
    keywordstyle    = \color{blue},
    stringstyle     = \color{green},
    commentstyle    = \color{red}\ttfamily
}
\newtheorem{theorem}{Theorem}
\newtheorem{definition}[theorem]{Definition}
\newtheorem{claim}[theorem]{Claim}
\newtheorem{remark}[theorem]{Remark}
\newtheorem{lemma}[theorem]{Lemma}
\newtheorem{hypothesis}[theorem]{Hypothesis}
\newtheorem{corollary}[theorem]{Corollary}
\newenvironment{restated_thm}[1]
{%
 \addtocounter{theorem}{-1}%
 \begin{theorem}}
{\end{theorem}}
\newenvironment{restated_cor}[1]
{%
 \addtocounter{theorem}{-1}%
 \begin{corollary}}
{\end{corollary}}
\begin{document}

\title{On Lower Bounds of Approximating Parameterized $k$-Clique}
\author{
Bingkai Lin \thanks{Nanjing University. Email: \texttt{lin@nju.edu.cn}}
\and
Xuandi Ren\thanks{Peking University. Email: \texttt{renxuandi@pku.edu.cn}}
\and
Yican Sun\thanks{Peking University. Email: \texttt{sycpku@pku.edu.cn}}
\and 
Xiuhan Wang\thanks{Tsinghua University. Email: \texttt{wangxh19@mails.tsinghua.edu.cn}}
}

\maketitle

\begin{abstract}
    Given a simple graph $G$ and an integer $k$, the goal of {\sc $k$-Clique} problem is to decide if $G$ contains a complete subgraph of size $k$. We say an algorithm approximates {\sc $k$-Clique} within a factor $g(k)$ if it can find a clique of size at least $k / g(k)$ when $G$ is guaranteed to have a $k$-clique. Recently, it was shown that approximating {\sc $k$-Clique} within a constant factor is {\sf W[1]}-hard \cite{Lin21}.
    
    We study the approximation of {\sc $k$-Clique} under the \emph{Exponential Time Hypothesis} ({\sf ETH}). The reduction of \cite{Lin21} already implies an $n^{\Omega(\sqrt[6]{\log k})}$-time lower bound under {\sf ETH}. We improve this lower bound to $n^{\Omega(\log k)}$. Using the gap-amplification technique by expander graphs, we also prove that there is no $k^{o(1)}$ factor FPT-approximation algorithm for {\sc $k$-Clique} under {\sf ETH}.
    
    We also suggest a new way to prove the \emph{Parameterized Inapproximability Hypothesis} ({\sf PIH}) under {\sf ETH}. We show that if there is no $n^{O(\frac{k}{\log k})}$ algorithm to approximate {\sc $k$-Clique} within a constant factor, then {\sf PIH} is true.
\end{abstract}

\section{Introduction}
\label{sec:intro}

In this paper, we study the {\sc $k$-Clique} problem: given a simple graph $G$ and an integer $k$, decide whether $G$ contains a complete subgraph of size  $k$. As shown in \cite{kar72}, {\sc $k$-Clique} is one of the most classical NP-complete problems. Its inapproximability in the classical complexity regime has also been studied extensively \cite{feige1996interactive,bellare1993efficient,bellare1994improved,goldwasser1998introduction,feige2000two,hastad1996clique,Zuc07}. Along a long line of research, it was proved that even approximating {\sc Clique} into a ratio of $n^{1-\varepsilon}$ is NP-hard.

In recent years, the hardness of approximating {\sc $k$-Clique} has received increased attention in the parameterized complexity regime. When guaranteed that the maximum clique is of size $k$, people wonder if there is an algorithm which runs in $f(k)n^{O(1)}$ time, and can find a clique of size at least $k/g(k)$, for some computable functions $f$ and $g$. Such an algorithm is called a \textit{$g(k)$-FPT-approximation} for the {\sc $k$-Clique} problem.

Previously, \cite{CCK+17} ruled out all $g(k)$-FPT-approximation algorithms of {\sc $k$-Clique} for any $g(k)=o(k)$ under the \textit{Gap Exponential Time Hypothesis} (\textsf{Gap-ETH}) \footnote{{\sf Gap-ETH} states that no subexponential time algorithm can distinguish whether a {\sc 3SAT} formula is satisfiable or every assignment satisfies at most $1 - \varepsilon$ fraction of clauses for some $\varepsilon > 0$.}. They even showed that assuming \textsf{Gap-ETH}, it is impossible to find a clique of size $\varepsilon (k)$ in $f(k)n^{o(\varepsilon(k))}$ time. However, as \textsf{Gap-ETH} is such a strong hypothesis that it already gives a gap in hardness of approximation, it is still of great interest to prove the same lower bound under an assumption without an inherent gap. People may further wonder:

\begin{center}
    \textit{Assuming {\sf ETH}, does finding a clique of size $\varepsilon(k)$ in {\sc $k$-Clique} require $f(k)n^{\Omega(\varepsilon(k))}$ time?}
\end{center}

 In a recent work \cite{Lin21}, Lin showed that {\sc $k$-Clique} does not admit constant factor FPT-approximation algorithms unless $\sf W[1]=FPT$. This was the first successful attempt to bypass \textsf{Gap-ETH} to prove the hardness of approximating {\sc $k$-Clique}. Unfortunately, \cite{Lin21} reduces a {\sc $k$-Clique} instance to a constant gap {\sc $k'$-Clique}\footnote{Given $k\in\mathbb{N}$, $\varepsilon\in(0,1)$ and  a simple graph $G$,  the constant gap {\sc $k$-Clique} problem is to decide whether $G$ contains a $K_k$ subgraph or $G$ contains no $K_{\varepsilon k}$ subgraph.} instance with $k'=2^{k^6}$. As there is no $f(k)n^{o(k)}$ time algorithm for {\sc $k$-Clique} assuming \textsf{ETH}, \cite{Lin21} actually ruled out $f(k) n^{o(\sqrt[6]{\log k})}$ time constant approximation algorithms for  {\sc $k$-Clique} under {\sf ETH}. Comparing to \cite{CCK+17}, such a lower bound is still far beyond satisfaction, and it remains open to avoid the huge parameter blow-up in the gap-producing reduction to obtain a better lower bound.

The main result of this paper is
\begin{theorem}
\label{thm:main}
    Assuming {\sf ETH}, for any constant $c>1$ and any computable function $f$, no algorithm can find a clique of size $k/c$ in the {\sc $k$-Clique} problem in $f(k)n^{o(\log k)}$ time.
\end{theorem}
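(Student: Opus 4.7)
The goal is to build a gap-producing FPT reduction from $k$-Clique to a constant-gap $k'$-Clique instance whose parameter blow-up is only $k' = 2^{O(k)}$, rather than the $k' = 2^{k^6}$ achieved by \cite{Lin21}. Since ETH rules out $n^{o(k)}$ algorithms for $k$-Clique, such a reduction immediately implies that constant-factor approximation of $k'$-Clique needs $f(k')\cdot n^{\Omega(\log k')}$ time, which is exactly Theorem~\ref{thm:main} after renaming $k'$ to $k$. So the whole task reduces to designing the more efficient gadget.

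Concretely, I would aim for a reduction producing a graph $H$ on $n^{O(1)}$ vertices, partitioned into $k' = 2^{O(k)}$ color classes, with the following two properties: \emph{completeness}, that a $k$-clique in $G$ yields a colorful $k'$-clique in $H$; and \emph{soundness}, that if $G$ has no $k$-clique, then every clique in $H$ has size at most $k'/c$ for a fixed constant $c>1$. Each color class corresponds to a coordinate of an algebraic ``gadget,'' each vertex to a candidate local assignment, and each edge to pairwise consistency of two local assignments. Completeness is enforced by construction, so the difficulty is soundness.

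The core design idea is to replace the recursive, multi-level vector gadget of \cite{Lin21}, whose levels are what inflates $k$ to $k^6$ in the exponent, with a \emph{single-stage} algebraic encoding. My concrete candidate is to interpret a $k$-tuple of vertices of $G$ as the message of a low-degree polynomial (or a Reed--Solomon / algebraic-geometry codeword) over a field of size $2^{O(1)}$, and let the coordinates of the codeword index the color classes; each vertex in a class then encodes a local ``view'' of a partial clique, and edges encode consistency between views. Because the codeword has length $2^{O(k)}$ in characteristic $2$, we get $k' = 2^{O(k)}$ coordinates, which is the correct budget for the final bound.

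The \textbf{main obstacle} will be the soundness analysis. Given a clique of size $k'/c$ in $H$, I need to extract $k$ actual vertices of $G$ that form a clique in $G$. The plan is to treat the vertices of the $k'/c$-clique as a large set of locally consistent codeword-positions agreeing with some ``pseudo-codeword,'' then apply a list-decoding or pigeonhole argument based on the distance of the code to conclude that all of these local views come from a single $k$-vertex message whose coordinates must be pairwise adjacent in $G$. The reason the parameter blow-up in \cite{Lin21} was so large is that the analogous extraction argument there required iterating a ``product gadget'' several times to amplify a weak distance guarantee; the hard part here will be proving the distance guarantee of the single-stage code directly suffices, i.e.\ that agreement on a $1/c$ fraction of coordinates already forces a unique underlying $k$-clique of $G$. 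Once this soundness step is established with constant $c$, the composition with ETH and the standard reparameterization completes the proof of Theorem~\ref{thm:main}.
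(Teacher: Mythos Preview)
Your plan has the right target (parameter blow-up $k'=2^{O(k)}$) but a genuine gap in how you get there, and it diverges from the paper's route in a way that matters.

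First, the paper does \emph{not} start from $k$-Clique. It reduces directly from {\sc 3SAT} to {\sc $k$-VectorSum} (Theorem~\ref{3SATtokVS}), and then from {\sc $k$-VectorSum} to constant-gap $k'$-Clique with $k'=|\mathbb{F}|^{O(k)}$ (Theorem~\ref{kVStokClique}). The reason is exactly the one the paper states in Section~\ref{sec:lowerbound}: the only known reduction from $k$-Clique to VectorSum lands in $\Theta(k^2)$-VectorSum, so if you start from $k$-Clique you inherit a quadratic blow-up before you even begin the coding gadget. Concretely, composing $k$-Clique $\to$ $k^2$-VectorSum with the paper's second step would give $k'=2^{O(k^2)}$, hence only an $n^{\Omega(\sqrt{\log k})}$ lower bound, not $n^{\Omega(\log k)}$. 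Your outer framework would need a linear-blow-up $k$-Clique $\to$ $O(k)$-VectorSum reduction, which the paper explicitly flags as open.

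Second, your proposed alternative of bypassing VectorSum and encoding a $k$-tuple of $G$-vertices directly as a codeword is where the real missing idea lies. The soundness sketch ``list-decode to a single message, then its coordinates form a clique'' does not go through: distance of a code lets you recover a unique $k$-tuple from a $1/c$ fraction of consistent positions, but it says nothing about whether the recovered tuple is a \emph{clique} in $G$. The $\binom{k}{2}$ pairwise adjacency constraints are non-algebraic and you have not explained how they are embedded into the local views so that large agreement forces all of them simultaneously. This is precisely why the paper routes through {\sc $k$-VectorSum}: there the global constraint is a single linear equation $\sum_i \vec v_i=\vec 0$, which fits the bilinear encoding $x_{\alpha,\beta}=f(\alpha,\sum_i\beta_i\vec v_i)$; soundness then uses local \emph{testability} (the low-degree and linearity tests of Lemma~\ref{B1}) to force the bilinear shape, followed by local \emph{decodability} (the arity-2 difference tests in Lemmas~\ref{B2}--\ref{B4}) to extract each $\vec v_u\in V_u$ and verify the sum. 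None of that machinery is a code-distance argument, and your proposal does not supply a substitute that handles clique constraints.

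A smaller but still real issue: a $k$-tuple of $G$-vertices carries $\Theta(k\log n)$ bits, so a length-$2^{O(k)}$ codeword over a constant-size alphabet cannot hold it; the paper resolves this by letting each position take values in $\mathbb{F}^{\ell}$ with $\ell=\Theta(k+\log n)$, and by using the random matrices of Lemma~\ref{A0} to compress dimension while preserving the separations needed for soundness. Your sketch does not account for this alphabet/compression step.
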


As an application, we combine our main result with the classical gap-amplification technique to rule out any $k^{o(1)}$-ratio FPT-approximation algorithms for {\sc $k$-Clique} under {\sf ETH}. Let us not fail to mention that \cite{KS21} recently proved  similar lower bound based on a weaker hypothesis $\sf W[1]\neq FPT$. Our result is formally stated as follows.
\begin{corollary}
\label{cor:main}
	Assuming {\sf ETH}, for any $g(k) = k^{o(1)}$, the {\sc $k$-Clique} problem has no $g(k)$-FPT-approximation algorithm.
\end{corollary}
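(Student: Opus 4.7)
The plan is to upgrade a hypothetical $g(k) = k^{o(1)}$ FPT-approximation algorithm for $k$-Clique into a contradiction with Theorem~\ref{thm:main}. Starting from a constant-gap $k$-Clique instance (which Theorem~\ref{thm:main} shows is hard), I would produce via gap amplification a new $k'$-Clique instance whose larger gap can be decided by the hypothetical FPT-approximation, and whose total runtime is small enough to violate the $n^{o(\log k)}$ lower bound.

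The amplification uses the strong graph product: for a graph $G$ on $n$ vertices, the $t$-fold strong product $G^{\boxtimes t}$ has $n^t$ vertices and satisfies $\omega(G^{\boxtimes t}) = \omega(G)^t$, since any clique decomposes as a Cartesian product of cliques in $G$. Thus a constant-gap instance $(G,k)$ with gap $c$ becomes $(G^{\boxtimes t}, k^t)$ with gap $c^t$. A more size-efficient variant based on random walks on an explicit expander---the ``classical expander-based amplification'' alluded to above---gives the same gap amplification with only $n \cdot d^{O(t)}$ vertices; for the FPT argument either version suffices. Applying the hypothetical $g$-FPT-approximation to the amplified instance takes time $f(k^t) \cdot n^{O(t)}$ and returns a clique of size $\geq k^t / g(k^t)$; this determines the original YES/NO case provided $g(k^t) < c^t$.

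Finally, one would choose $t = t(k)$ with $t(k) \to \infty$ and $t(k) = o(\log k)$, so that simultaneously (i) $k^t$ is large enough that $g(k^t) < c^t$---using the hypothesis $g(N) = N^{o(1)}$, i.e.\ $\log g(N)/\log N \to 0$, which gives $\log g(k^t)/t \to 0$ while $\log c$ is fixed---and (ii) the runtime $f(k^t)\cdot n^{O(t)}$ collapses to $f'(k)\cdot n^{o(\log k)}$. Such a $t(k)$ then directly contradicts Theorem~\ref{thm:main}. The main obstacle is verifying both conditions simultaneously: writing $g(N)=N^{s(N)}$ with $s(N)\to 0$, the gap condition reads $s(k^t) < \log c/\log k$, so one must choose $t(k)$ large enough to push $k^t$ past the threshold $N_0(\log c/\log k)$ while keeping $t(k)=o(\log k)$. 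Slowly-converging $s$ can make this threshold grow fast, in which case one takes the constant $c$ suitably large (from Theorem~\ref{thm:main}, any constant works) and/or appeals to the sharper expander-based amplification, whose smaller size blow-up gives more room to keep $t(k)$ in the $o(\log k)$ regime across the full $k^{o(1)}$ family.
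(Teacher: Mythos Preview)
Your overall plan---amplify the constant gap, apply the hypothetical $g$-FPT-approximation, contradict Theorem~\ref{thm:main}---matches the paper's. But the claim that ``for the FPT argument either version suffices'' is wrong: the naive strong product does \emph{not} work for all $g(k)=k^{o(1)}$. Take $g(k)=2^{\sqrt{\log k}}=k^{1/\sqrt{\log k}}$. Under the $t$-fold strong product the new parameter is $k'=k^t$, and the gap condition $g(k^t)<c^t$ reads $\sqrt{t\log k}<t\log c$, forcing $t>\log k/(\log c)^2=\Omega(\log k)$ for every constant $c$. This destroys the $n^{O(t)}=n^{o(\log k)}$ runtime bound you need to invoke Theorem~\ref{thm:main}; taking $c$ ``suitably large'' does not help, so the obstacle you flag at the end is fatal for the strong product, not a technicality.

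Your description of the expander variant is also off. It does \emph{not} shrink the instance to $n\cdot d^{O(t)}$ vertices; the amplified graph in Theorem~\ref{kCliqueGapAmp} still has size $n^{O(t)}$. What the expander buys---and what the paper actually exploits---is a small \emph{parameter}: the random walk is taken on an expander over the $k$ groups, so $k'=kd^{t-1}=k^{1+o(1)}$ whenever $t=o(\log k)$, while the gap still decays geometrically in $t$. The gap condition then becomes $\log g(k')\approx s(k)\log k<(t-1)\log(1/\mu)$, and since $s(k)\log k=o(\log k)$ by hypothesis, a suitable $t=o(\log k)$ always exists. That is the mechanism your fallback needs to invoke; the ``smaller size blow-up'' explanation misidentifies where the savings come from.
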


We also study the relationship between the constant gap {\sc $k$-Clique} problem and the \textit{parameterized inapproximablity hypothesis} ({\sf PIH}) \cite{LRSZ20}, a central conjecture in parameterized complexity. 
\ Roughly speaking, \textsf{PIH} states that it is impossible to approximate a 2-CSP instance over $k$ variables with alphabet $[n]$ to a constant factor in FPT time.
It is known in \cite{feige1996interactive} that if \textsf{PIH} is true, then there is no FPT algorithm for constant gap {\sc $k$-Clique}. However, the reverse direction is not known yet. Furthermore, although \textsf{PIH} can be deduced from  \textsf{Gap-ETH} via standard reductions in \cite{CHKX04,ChenHKX06}, proving \textsf{PIH} under gap-free hypotheses (e.g. \textsf{ETH}, $\sf W[1]\neq FPT$) is still quite open and is believed to require a PCP-like theorem in parameterized complexity. 
\ We show that an almost-tight running time lower bound of constant gap {\sc $k$-Clique} could imply {\sf PIH}. Our theorem suggests a new way to prove {\sf PIH} under {\sf ETH}, namely, by using constant gap {\sc $k$-Clique} as an intermediate problem. It is formally stated as follows.

\begin{theorem}\label{thm:gapcliqe2pih}
If there is no $f(k) n^{O\left(\frac{k}{\log k}\right)}$ time algorithm for constant gap {\sc $k$-Clique}, then {\sf PIH} is true.
\end{theorem}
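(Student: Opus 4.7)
The plan is to prove the contrapositive: if \textsf{PIH} fails, then constant gap {\sc $k$-Clique} admits an $f(k)\cdot n^{O(k/\log k)}$ time algorithm. So I would suppose an FPT algorithm $\mathcal{A}$ that, given any 2-CSP instance over $k'$ variables and alphabet $[N]$, distinguishes $\mathrm{val}=1$ from $\mathrm{val}\leq \lambda$ in time $f(k')\cdot N^{O(1)}$ for some fixed $\lambda<1$, and aim to feed $\mathcal{A}$ a 2-CSP built from the clique instance whose parameters fit the target budget.

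The natural construction is as follows: given $G$ on $n$ vertices, set $k':=\log k$ and $t:=k/\log k$, and form a 2-CSP $\Psi_0$ with variables $x_1,\dots,x_{k'}$ whose alphabet $\Sigma$ is the set of $t$-cliques of $G$ (so $|\Sigma|\leq n^{k/\log k}$). For each pair $i\neq j$, the constraint $C_{ij}(S_i,S_j)$ holds iff $S_i\cap S_j=\emptyset$ and $S_i\cup S_j$ is a clique in $G$. Completeness is immediate: any $K_k$ partitions into $k'$ disjoint $t$-cliques that give a satisfying assignment. Running $\mathcal{A}$ on the 2-CSP that we eventually feed it takes $f(\log k)\cdot n^{O(k/\log k)}$ time, matching the desired bound.

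The heart of the proof is the soundness: if $G$ has no $K_{k/c}$ for a suitable constant $c$, then the fed instance $\Psi$ should have $\mathrm{val}(\Psi)\leq \lambda$. A direct Tur\'an-style argument on the satisfied-constraint graph on $[k']$ only extracts a clique of size $O(k/\log k)$ in $G$ from a $\lambda$-satisfying assignment of $\Psi_0$; in fact one can check on the Tur\'an graph $T(n,k/c)$ that $\mathrm{val}(\Psi_0)\geq 1-O(c/\log k)$, showing the naive reduction loses a $\Theta(\log k)$ factor in the gap. To close this, I would apply an expander-based gap amplification to $\Psi_0$ in the spirit of the technique used in Corollary~\ref{cor:main}: tensor constraints along short random walks on a constant-degree expander over $[k']$ so as to amplify the soundness gap from $\Omega(1/\log k)$ up to a constant, while keeping the alphabet within $n^{O(k/\log k)}$.

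The main obstacle is precisely this amplification: off-the-shelf parallel repetition would blow the alphabet up to $n^{\Omega(k)}$, overshooting the budget. A more frugal amplification scheme, exploiting the blow-up structure of satisfying assignments of the clique-based 2-CSP (the $k'$ values are $t$-cliques whose pairwise unions are $2t$-cliques, so any subclique of the satisfied-constraint graph lifts to a clique in $G$), is the key technical ingredient that has to be designed carefully; once in place, applying $\mathcal{A}$ to the amplified instance yields the claimed $f(k)\cdot n^{O(k/\log k)}$ time algorithm.
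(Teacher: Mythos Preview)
Your overall contrapositive strategy is sound, and you correctly isolate the crux: the naive $2$-CSP $\Psi_0$ on $k'=\log k$ variables with alphabet of size $n^{k/\log k}$ has only an $O(1/\log k)$ soundness gap, so something more is needed. However, the step you leave open is exactly the hard one, and the fix you sketch does not work within the budget. Expander-walk amplification (derandomized parallel repetition, Dinur-style powering, etc.) of a $2$-CSP with value $1-\Theta(1/\log k)$ to constant soundness requires combining $\Theta(\log k)$ coordinates, which raises the alphabet to $|\Sigma|^{\Theta(\log k)} = n^{\Theta(k)}$; the special ``clique-of-$t$-cliques'' structure you invoke does not circumvent this, since the only structural fact it buys is that cliques in the satisfied-constraint graph lift to cliques in $G$, and Tur\'an is tight for that. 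So as written the proposal has a genuine gap at the amplification step.

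The paper takes a different route that sidesteps this amplification entirely. It first passes to {\sc $k$-Biclique} (bipartite double cover), then uses a \emph{disperser} to boost the biclique gap: from Gap-biclique$(k,\varepsilon k)$ it produces Gap-biclique$(k,c\log k)$ at the cost of blowing up the instance to size $n^{O(k/\log k)}$. The point of going through biclique is that the K\H{o}v\'ari--S\'os--Tur\'an theorem then finishes the job for free: a bipartite graph on $k+k$ vertices with no $K_{c\log k,c\log k}$ has at most $O(k^{2-1/(c\log k)})\approx \varepsilon'' k^2$ edges for a constant $\varepsilon''<1$, so the absence of a small biclique already forces a \emph{constant} density gap. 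In other words, the paper does its amplification on the combinatorial side (biclique number, via dispersers) rather than on the CSP value, and then exploits the KST bound, which is exponentially stronger than Tur\'an in this regime. That is the missing idea in your approach.
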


\subsection{Our Techniques}

\noindent\textbf{From {\sc 3SAT} to gap {\sc $k$-Clique}.} 
Recall that the reduction in \cite{Lin21} consists of two steps. First, it reduces {\sc $k$-Clique}  to  {\sc $k^2$-VectorSum}, while introducing a quadratic blow-up of the parameter. Next, it transforms {\sc $k$-VectorSum} to {\sc CSP} on $k'=2^{O(k^3)}$ variables $\{x_{\vec{a}_1,\ldots,\vec{a}_k} : \vec{a}_1,\ldots,\vec{a}_k\in\mathbb{F}^k\}$, and then to constant gap {\sc $k'$-Clique}. The two steps together cause the parameter to grow from $k$ to $2^{O(k^6)}$.

To give a tighter lower bound of constant gap {\sc $k$-Clique} under {\sf ETH}, we deal with the above two steps separately. First, we show a reduction directly from {\sc 3SAT} to {\sc $k$-VectorSum}, resulting in a tighter lower bound of {\sc $k$-VectorSum} under {\sf ETH}. Then, we give a more succinct reduction from {\sc $k$-VectorSum} to {\sc CSP} on $k'=2^{O(k)}$ variables $\{x_{a_1,\ldots,a_k} : a_1,\ldots,a_k\in\mathbb{F}\}$, and then to constant gap {\sc $k'$-Clique}. In our new reduction, the parameter blow-up throughout is only $2^{O(k)}$, leading to an $n^{\Omega(\log k)}$ lower bound for constant gap {\sc $k$-Clique}.

Since the second step is more complicated, we will briefly introduce the ideas here. Given an {\sc $k$-VectorSum} instance $(V_1,\ldots,V_k,\vec t)$, we build a {\sc CSP} instance on variable set $X=\{x_{a_1,\ldots,a_k} : a_1,\ldots,a_k\in\mathbb{F}\}$. Each variable takes value in $\mathbb F^m$ where $m$ is the dimension specified by the {\sc $k$-VectorSum} problem. In the yes-case, let $\vec{v}_1 \in V_1,\ldots,\vec{v}_k \in V_k$ be a solution that sum up to $\vec t$, we expect $x_{a_1,\ldots,a_k}$ to take the value $\sum_{i\in[k]}a_i\vec{v}_i$. Similar to \cite{Lin21}, we want to make the following three types of tests:
\begin{itemize}
	\item $\forall (a_1,\ldots,a_k),(b_1,\ldots,b_k)\in\mathbb{F}^k$, test whether $x_{a_1,\ldots,a_k}+x_{b_1,\ldots,b_k}=x_{a_1+b_1,\ldots,a_k+b_k}$.
	\item $\forall (a_1,\ldots,a_k) \in \mathbb F^k, a \in \mathbb F$, test whether $x_{a_1,\ldots,a_i+a,\ldots,a_k}-x_{a_1,\ldots,a_k}\in a V_i$. 
	\item $\forall (a_1,\ldots,a_k) \in \mathbb F^k, a \in \mathbb F$, test whether $x_{a_1+a,\ldots,a_k+a}-x_{a_1,\ldots,a_k}=a\vec{t}$.
\end{itemize}

If an assignment passes most of the linearity tests,  then there must be vectors $\vec{u}_1,\ldots,\vec{u}_k\in\mathbb{F}^m$ such that $x_{a_1,\ldots,a_k}=\sum_{i\in[k]}a_i\vec{u}_i$ for most $(a_1,\ldots,a_k)\in\mathbb{F}^k$. The second step is meant to guarantee that the selected vectors indeed come from the input. Finally we need the third step to check whether they sum up to $\vec t$.

Note that in our reduction from {\sc 3SAT} to {\sc $k$-VectorSum}, we require the dimension $m$ to be at least $\Omega(k \log n)$. Thus in the {\sc CSP} instance, we cannot simply leave the alphabet to be $\mathbb F^{m}=n^{\Omega(k)}$, which is too large. To reduce the  dimension, we pick $\ell=\Theta(k+\log n)$ matrices $A_1,\ldots,A_\ell\in\mathbb{F}^{k\times m}$ independently at random, and define a new {\sc CSP} problem on variable set $Y=\{y_{\vec{\alpha},\vec{\beta}}:\vec{\alpha},\vec{\beta}\in \mathbb{F}^k\}$, where each $y_{\vec{\alpha},\vec{\beta}}$ is supposed to take the value
\begin{equation}\label{eq:Yform}
\begin{aligned}
y_{\vec{\alpha},\vec{\beta}} &
=(\vec{\alpha} A_1x_{\vec{\beta}},\ldots,\vec{\alpha} A_\ell x_{\vec{\beta}}) \\ 
& =(\vec{\alpha} A_1\sum_{i\in[k]}\beta_i \vec{v}_i,\ldots,\vec{\alpha} A_\ell \sum_{i\in[k]}\beta_i \vec{v}_i ) \\
& =\sum_{i\in[k],j\in[k]}\beta_i\alpha_j (A_1[j]\vec{v}_i,\ldots,A_\ell[j]\vec{v}_i) \\
& \triangleq \sum_{i\in[k],j\in[k]}\beta_i\alpha_j C_{i,j}.
\end{aligned}
\end{equation}
Now the alphabet size is only $\mathbb F^{\ell}=2^{O(k)}n^{O(1)}$. With this idea in mind, we add local constraints to enforce that the assignment to $Y$ is of the above quadratic form (in terms of $\alpha_1,\ldots,\alpha_k$ and $\beta_1,\ldots,\beta_k$), and then use locally decodable properties of quadratic polynomials to extract information about vectors $\vec v_1,\ldots,\vec v_k$. 

In a high level, our construction generalizes that of \cite{Lin21} by replacing the linear code with the Reed-Muller code based on quadratic polynomials.
\medskip

\noindent\textbf{Expander graph production.} To obtain an FPT time lower bound for {\sc $k$-Clique} with $k^{o(1)}$ gap, we apply the standard expander graph product technique. Starting from a constant gap {\sc $k$-Clique} instance, we amplify the gap using an expander graph $H$ on vertex set $[k]$. The new instance contains $k^t$ groups of vertices. Each group corresponds to a unique path of length-$t$ random walk on $H$, and forms an independent set of size $n^t$. A vertex in a group represents a length-$t$ sequence of vertices from the original instance. Two (sequences of) vertices are linked if and only if the vertices contained in them form a clique in the original instance. By properties of expander graphs, we get a {\sc $k^t$-Clique} instance with gap $(\varepsilon')^t$ for some constant $\varepsilon'$. Take $t=o(\log k)$, we can rule out $k^{o(1)}$-ratio FPT-approximation algorithms for {\sc $k$-Clique} under {\sf ETH}.
\medskip

\noindent \textbf{From gap {\sc $k$-Clique} to {\sf PIH}.} The proof that strong lower bound of constant gap {\sc $k$-Clique} implies {\sf PIH} goes as follows. First, we reduce constant gap {\sc $k$-Clique} to constant gap {\sc $k$-Biclique} in the canonical way. Next, we use a combinatorial object called disperser to amplify the gap from a constant to $\frac{k}{\log k}$. The result then follows from the K{\~o}v{\'a}ri-S{\'o}s-Tur{\'a}n Theorem which states that every $2k$-vertex graph without a $K_{\log k,\log k}$-subgraph has at most $O((2k)^{2-\frac{1}{\log k}})$ edges.

\subsection{Organization of the Paper}
The paper is organized as follows. In Section \ref{sec:pre}, we put some preliminaries, including the definitions of problems, hypotheses, and some algebraic and combinatorial tools used in our proofs. In Section \ref{sec:lowerbound}, we prove the {\sf ETH} lower bound of constant gap {\sc $k$-Clique}. In Section \ref{sec:amplification}, we show how to amplify the gap to rule out $k^{o(1)}$-ratio FPT-approximation algorithms for {\sc $k$-Clique} under {\sf ETH}. In Section \ref{sec:pih}, we show how an almost-tight running time lower bound of constant gap {\sc $k$-Clique} implies {\sf PIH}. Finally, in Section \ref{sec:conclusion}, we conclude with a few open questions.
\section{Preliminaries}
\label{sec:pre}
	\subsection{Problems}

	Here we list all the computational problems which are of relevant to our paper.
	
	\begin{itemize}
		\item {\sc 3SAT.} The input is a 3-CNF formula $\varphi$ with $m$ clauses on $n$ variables. The goal is to decide whether there is a satisfying assignment for $\varphi$.
		
		\item {\sc CSP.} The input of a constraint satisfaction problem is a set of variables $X=\{x_1,\ldots,x_n\}$ together with a family of  constraints $\{C_1,C_2,\ldots,C_m\}$ and an alphabet $\Sigma$.  For every $i\in [m]$,  $C_i=(\vec{s}_i,R_i)$, where $\vec{s}_i=(x_{j_1},\ldots,x_{j_{\ell_i}})$ is an $\ell_i$-tuple of variables for some $\ell_i\in[n]$, and $R_i\subseteq\Sigma^{\ell_i}$ indicates a restriction on valid assignments for those $\ell_i$ variables.  The goal is to find  an assignment $\sigma : X\to \Sigma$  such that for all $i\in[m]$, $\sigma(\vec{s}_i)\in R_i$. We call $n,m,q$ and $|\Sigma|$ respectively the number of vertices, the number of clauses, the arity ($=\max_{i \in [m]} \ell_i$), and the alphabet size of this CSP problem.

		\item {\textsc{$k$-Clique}}. The input is an undirected graph $G=(V_1 \dot\cup\ldots\dot\cup V_k,E)$ with $n$ vertices divided into $k$ disjoint groups. The goal is to decide whether we can pick one vertex from each group, such that they form a clique of size $k$.
		
		\item \textsc{$k$-Biclique}. The input is an undirected bipartite graph $G=(V_1 \dot\cup\ldots\dot\cup V_k, U_1\dot\cup\ldots\dot\cup U_k,E)$, where $n$ vertices are divided into $2k$ disjoint groups. The goal is to decide whether we can pick one vertex from each group, such that they form a biclique $K_{k,k}$.
		
	    \item \textsc{Densest $k$-Subgraph}. The input is an undirected graph $G=(V_1 \dot\cup \ldots\dot\cup V_k,E)$ with $n$ vertices divided into $k$ disjoint groups. The goal is to pick one vertex from each group, such that they induce maximum number of edges.
	    
		\item {\sc $k$-VectorSum.} The input consists of $k$ groups of vectors $V_1,\ldots,V_k \subseteq\mathbb F^{d}$ together with a target vector $\vec t \in \mathbb F^d$, where $\mathbb F$ is a finite field of constant size. The goal is to decide whether there exists $\vec v_1\in V_1,\ldots \vec v_k \in V_k$ such that $\sum_{i=1}^k \vec v_i=\vec t$. Throughout our paper we only need the version that $\vec t$ equals to $\vec 0$, and will omit it afterwards.
	\end{itemize}
	
	\subsection{Hypotheses}
	Now we list some computational complexity hypotheses which are related to our results.
	
	\begin{hypothesis}[Exponential Time Hypothesis ({\sf ETH}) \cite{IP01,IPZ01,Tov84}]
		{\sc 3SAT} with $n$ variables and $m=O(n)$ clauses cannot be solved deterministically in $2^{o(n)}$ time. Moreover, this holds even when restricted to formulae in which each variable appears in at most three clauses.
	\end{hypothesis}
	
	Note that the original statement in \cite{IP01} is does not enforce the requirement that each variable appears in at most three clauses. For the restricted version, we first apply the Sparsification Lemma in \cite{IPZ01}, which implies that without loss of generality we can assume the number of clauses $m = O(n)$. Then we apply Tovey's reduction \cite{Tov84}, which produces a {\sc 3SAT} instance with at most $3m + n = O(n)$ variables and each variable appears in at most three clauses. Thus the restricted version is equivalent to the original statement.
	
	The next hypothesis is {\it Parameterized Inapproximability Hypothesis} ({\sf PIH}), a central conjecture in parameterized complexity. We state it in terms of inapproximability  of {\sc Densest $k$-Subgraph} as follows. 
	
    \begin{hypothesis} [Parameterized Inapproximability Hypothesis ({\sf PIH}) \cite{LRSZ20}]
	    There exists a constant $\varepsilon>0$ such that \textsc{Densest $k$-Subgraph} has no $(1+\varepsilon)$ factor FPT-approximation algorithm. In other words, no algorithm can distinguish the following two cases in $f(k)\cdot n^{O(1)}$ time, for any computable function $f$:
		\begin{itemize}
			\item (Completeness.) There exist $v_1 \in V_1,\ldots, v_k \in V_k$ such that they form a clique.
			\item (Soundness.) For any $v_1 \in V_1,\ldots, v_k \in V_k$, they induce only $\binom{k}{2}/(1+\varepsilon)$ edges.
		\end{itemize}
	\end{hypothesis}
	
	The factor $(1+\varepsilon)$ can be replaced by any constant larger than $1$, and the conjecture remains equivalent. Note that the original statement of {\sf PIH} in \cite{LRSZ20} says that {\sc Densest $k$-Subgraph}  is {\sf W[1]}-hard to approximate, but for our use, we choose a relaxed form which states that it has no constant ratio FPT-approximation algorithm, as in \cite{feldmann2020survey}.

	It is worth noting that the relationship between {\sf PIH} and gap {\sc $k$-Clique} is not completely known yet. If for a graph the number of edges induced by $k$ vertices is only $\approx \varepsilon^2 \binom{k}{2}$, it cannot have a clique of size $>\varepsilon k$. Thus, {\sf PIH} implies  \textsc{$k$-Clique} does not admit constant ratio FPT-approximation algorithms. However, the other direction is not necessarily true (forbidding small clique does not imply low edge density), and it remains an important open problem that whether {\sf PIH} holds if we assume {\sc $k$-Clique} is hard to approximate within any constant factor in FPT time~\cite{feldmann2020survey}.

	\subsection{Low Degree Test}
	Let $\mathbb{F}$ be a field of prime cardinality. We say a function $f$ is $\delta$-close to a function class $\mathcal{F}$ if it is possible to modify at most $\delta$ fraction of values of $f$ such that the modified function lies in $\mathcal{F}$.
	
	The canonical low degree test proposed in \cite{RS96} can query a function $f$ at $d+2$ points, and 
	\begin{itemize}
		\item accepts with probability 1 whenever $f$ is a degree-$d$ polynomial,
		\item rejects with probability at least $\varepsilon>0$ if $f$ is not $\delta$-close to degree-$d$ polynomials, where $\varepsilon,\delta$ are two constants.
	\end{itemize}
	
	Throughout our paper we will consider the function class $\mathcal F$ to be vector-valued degree-$d$ polynomials, namely, \[\mathcal F = \{(f_1, f_2, \ldots, f_{\ell}): \mathbb F^m \to \mathbb F^\ell \mid  \forall i \in [\ell], f_i \text{ is a degree-} d \text{ multivariate polynomial}\}.\] By slightly modifying the proof in \cite{RS96}, the low degree test can be easily generalized to vector-valued version, as formally stated below:
	
	\begin{lemma}\label{LDTVector}
		Let $d < |\mathbb{F}| / 2$ and $m\in \mathbb{N}$. There is an algorithm which, by querying the function $f = (f_1, f_2, \ldots, f_\ell): \mathbb{F}^m\to \mathbb{F}^{\ell}$ at $d+2$ points,
    	\begin{itemize}
    	    \item accepts with probability 1 whenever $f$ lies in $\mathcal F$,
    	    \item rejects with probability at least $\min(\delta / 2, cd^{-2})$ if $f$ is not $\delta$-close to $\mathcal F$, where $c,\delta$ are two constants.
    	\end{itemize}
    	
    	Moreover, the queries are generated by selecting $\vec{x}, \vec{h}\in \mathbb{F}^m$ uniformly at random, and $f$ is queried at $\{\vec{x} + i\vec{h} | 0 \le i < d+2\}$.
	\end{lemma}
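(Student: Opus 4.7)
The plan is to follow the standard Rubinfeld--Sudan analysis of \cite{RS96} while exploiting the key observation that the vector-valued line test accepts on a line if and only if every coordinate's scalar line test accepts on that same line. Completeness is immediate: if $f \in \mathcal F$ then each coordinate's restriction to the line $\{\vec{x} + t\vec{h} : t \in \mathbb{F}\}$ is a univariate degree-$d$ polynomial in $t$, so the $d+2$ queried values always pass the test.

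For soundness, let $\eta$ be the vector test's rejection probability and assume $\eta < \min(\delta/2, cd^{-2})$, where $c$ is the constant furnished by the scalar theorem in \cite{RS96}. For each coordinate $i \in [\ell]$, let $\eta_i$ denote the rejection probability of the scalar line test applied to $f_i$; a vector rejection forces at least one coordinate to reject, so $\eta_i \le \eta < cd^{-2}$, and the scalar Rubinfeld--Sudan theorem yields a multivariate degree-$d$ polynomial $h_i$ equal to the plurality self-corrector of $f_i$. Setting $g := (h_1, \ldots, h_\ell)$ places $g$ in $\mathcal F$, and I claim $\Pr_{\vec{y}}[g(\vec{y}) \ne f(\vec{y})] \le 2\eta < \delta$, which together with $g \in \mathcal F$ proves the contrapositive of the lemma.

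To bound the distance, for each $\vec{y}$ let $A(\vec{y}) := \Pr_{\vec{h}}[\text{the vector line test accepts on } (\vec{y}, \vec{h})]$, so that $\mathbb{E}_{\vec{y}}[1 - A(\vec{y})] = \eta$ and by Markov $\Pr_{\vec{y}}[A(\vec{y}) \le 1/2] \le 2\eta$. Whenever $A(\vec{y}) > 1/2$, more than half of the lines through $\vec{y}$ pass the vector test, which in particular means that for every coordinate $i$ simultaneously, more than half of the lines predict $f_i(\vec{y})$ via scalar interpolation; this strict majority forces the scalar self-corrector to satisfy $h_i(\vec{y}) = f_i(\vec{y})$. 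Hence $g(\vec{y}) = f(\vec{y})$ on this $(1-2\eta)$-fraction of points, yielding the claimed bound.

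The step I expect to require the most care is the distance bound in the third paragraph: running Rubinfeld--Sudan coordinate by coordinate and then taking a union bound would cost a factor of $\ell = \Theta(k + \log n)$, which would be fatal for the application in the {\sc 3SAT}-to-{\sc CSP} reduction. The fix is the observation that although the self-correctors $h_i$ are constructed independently per coordinate, their joint agreement with $f$ at a point $\vec{y}$ is controlled by a single $\ell$-independent Markov event $A(\vec{y}) > 1/2$, because acceptance of the vector test on a line automatically implies acceptance of every coordinate's scalar test on that same line.
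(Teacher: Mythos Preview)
Your proposal is correct and essentially matches the paper's Appendix~A proof: both define a self-corrector, bound its distance to $f$ via Markov on the vector acceptance probability $A(\vec{y})$, and argue the self-corrector is degree-$d$; the only cosmetic difference is that you take the coordinatewise scalar self-correctors $h_i$ and invoke the scalar \cite{RS96} theorem per coordinate to certify each $h_i$ is degree-$d$, whereas the paper takes the vector-valued plurality directly and reproves the Rubinfeld--Sudan argument for $\mathbb{F}^\ell$-valued functions. The key point you correctly isolate---that the distance bound must come from the single vector event $A(\vec{y})>1/2$ rather than a per-coordinate union bound costing a factor of $\ell$---is exactly what makes either version go through.

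One small slip in your writeup: the inequality $\eta_i\le\eta$ holds because a \emph{coordinate} rejection forces a \emph{vector} rejection (equivalently, vector acceptance implies every coordinate accepts), which gives $\{\text{coord }i\text{ rejects}\}\subseteq\{\text{vector rejects}\}$; the direction you stated (``a vector rejection forces at least one coordinate to reject'') is the converse and does not by itself yield $\eta_i\le\eta$.
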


    The proof is implicit in literature. To avoid distracting the reader, we defer it to Appendix \ref{sec:ldt}.

	\subsection{Expander Graphs}
	Given a $d$-regular undirected graph $G$ on $n$ vertices, define its \emph{normalized adjacency matrix} to be a matrix $A$ where $A_{ij}$ equals to the number of edges between $(i, j)$ divided by $d$. Define \[\lambda(G) = \max_{\left\|\vec{v}\right\| = 1, \langle \vec{v}, \vec{1}\rangle = 0} \left\| A\vec{v}\right\|_2.\] $G$ is an $(n, d, \lambda)$-expander if and only if $\lambda(G)\le \lambda$, and we have the following two Lemmas.
	
	\begin{lemma}[\cite{AKS87}]\label{ExpanderWalk}
		Let $G$ be an $(n, d, \lambda)$-expander,  $\mathcal{B}\subseteq [n]$ be a set of size $\le \varepsilon n$ for some $0<\varepsilon<1$ and  $(X_1, X_2, \ldots, X_t)$ be a sequence of random variables denoting a length-$t$ random walk where the starting vertex is also picked uniformly at random. Then,
			\[\Pr[\forall 1\le i\le t, X_i\in \mathcal{B}]\le ((1 - \lambda)\sqrt{\varepsilon} + \lambda)^{t-1}.\]
	\end{lemma}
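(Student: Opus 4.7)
The plan is to reduce the random-walk probability to a spectral bound on an absorbing version of the transition operator, and then extract the inequality from the expander's eigenvalue gap. I would first set up matrix notation: let $A$ be the normalized adjacency matrix of $G$, let $P$ be the diagonal projection with $P_{ii}=1$ iff $i\in\mathcal B$, and let $\pi=\mathbf 1/n$ be the uniform initial distribution. A direct induction on $t$ shows that
\[
\Pr[\forall\,1\le i\le t,\ X_i\in\mathcal B] \;=\; \mathbf 1^T (PAP)^{t-1} P\pi.
\]
Since $(PAP)^{t-1}P\pi$ is supported on $\mathcal B$, applying Cauchy-Schwarz to the inner product with $\mathbf 1_{\mathcal B}$ and using sub-multiplicativity of the operator norm gives
\[
\Pr[\,\cdot\,] \;\le\; \sqrt{|\mathcal B|}\,\|PAP\|_{\mathrm{op}}^{t-1}\|P\pi\|_2 \;\le\; \varepsilon\,\|PAP\|_{\mathrm{op}}^{t-1},
\]
using $\|P\pi\|_2 = \sqrt{|\mathcal B|}/n$ and $|\mathcal B|\le\varepsilon n$.

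The main step is the operator-norm bound $\|PAP\|_{\mathrm{op}}\le(1-\lambda)\sqrt{\varepsilon}+\lambda$. I would write $A=u_0u_0^T+A'$ with $u_0=\mathbf 1/\sqrt n$ the top eigenvector and $\|A'\|_{\mathrm{op}}\le\lambda$ by the expander assumption; equivalently, $A=(1-\lambda)(J/n)+\lambda E$ where $E:=J/n+A'/\lambda$ and $J=\mathbf 1\mathbf 1^T$. A short check shows $E$ is a contraction: $Eu_0=u_0$, and $\|Ev\|_2=\|A'v\|_2/\lambda\le\|v\|_2$ for $v\perp u_0$. Then for any unit vector $v$ supported on $\mathcal B$,
\[
\|PAv\|_2 \;\le\; (1-\lambda)\,|\langle v,u_0\rangle|\cdot\|Pu_0\|_2 + \lambda\,\|PEv\|_2 \;\le\; (1-\lambda)\sqrt{\varepsilon}+\lambda,
\]
using $(J/n)v=\langle v,u_0\rangle u_0$, $\|Pu_0\|_2=\sqrt{\varepsilon}$, $|\langle v,u_0\rangle|\le 1$, and $\|PEv\|_2\le\|Ev\|_2\le 1$.

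Combining the two bounds yields $\Pr[\,\cdot\,]\le\varepsilon\cdot((1-\lambda)\sqrt\varepsilon+\lambda)^{t-1}\le((1-\lambda)\sqrt\varepsilon+\lambda)^{t-1}$, since $\varepsilon\le 1$. The main obstacle is the operator-norm bound on $PAP$; once that is established, the remaining Cauchy-Schwarz and operator-norm manipulations are routine. The convex decomposition $A=(1-\lambda)(J/n)+\lambda E$ makes the key estimate transparent: each step behaves either as a uniform resampling (with weight $1-\lambda$), contributing a $\sqrt{\varepsilon}$ factor through $\|Pu_0\|_2$, or as a contraction (with weight $\lambda$), contributing at worst a factor of $1$.
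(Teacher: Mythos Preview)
The paper does not give its own proof of this lemma; it is stated with a citation to \cite{AKS87} and used as a black box. Your argument is the standard spectral proof of the Ajtai--Koml\'os--Szemer\'edi confinement bound and is correct as written: the identity $\Pr[\,\cdot\,]=\mathbf 1^{T}(PAP)^{t-1}P\pi$ follows from $P^{2}=P$, the Cauchy--Schwarz step is clean, and the decomposition $A=(1-\lambda)(J/n)+\lambda E$ with $E$ a contraction yields $\|PAP\|_{\mathrm{op}}\le(1-\lambda)\sqrt{\varepsilon}+\lambda$ exactly as you compute. One cosmetic remark: since your unit vector $v$ is supported on $\mathcal B$, you actually have $|\langle v,u_{0}\rangle|\le\|Pu_{0}\|_{2}\le\sqrt{\varepsilon}$ rather than merely $\le 1$, which would give the sharper per-step bound $(1-\lambda)\varepsilon+\lambda$; but the weaker $\sqrt{\varepsilon}$ version is what the paper quotes, so your estimate matches the stated lemma. (Also, the definition of $E$ tacitly assumes $\lambda>0$; the case $\lambda=0$ is trivial since then $A=J/n$.)
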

	
	\begin{lemma}[\cite{RVW00}]\label{ExpanderConstruction} 
	    For some constants $d\in \mathbb{N}$, $\lambda < 1$ and for sufficiently large $n$, an $(n,\lambda, d)$-expander can be constructed in $n^{O(1)}$ time.
	\end{lemma}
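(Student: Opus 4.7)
The plan is to prove Lemma \ref{ExpanderConstruction} via the iterative zig-zag product construction of \cite{RVW00}. The construction uses three operations on regular graphs: squaring $G \mapsto G^2$, which squares both the degree and the second eigenvalue $\lambda$; tensor product, which multiplies vertex counts and degrees; and the zig-zag product, which combines a large ``outer'' expander $G_1$ with a small ``inner'' expander $G_2$ whose vertex set is identified with the edge-labels of $G_1$. When $G_1$ is $d_1$-regular on $n_1$ vertices and $G_2$ is $d_2$-regular on $d_1$ vertices, their zig-zag product is a $d_2^2$-regular graph on $n_1 d_1$ vertices, and one step in it is a ``local-global-local'' move: take a $G_2$-step on the label coordinate, cross along the corresponding $G_1$-edge, then take another $G_2$-step.

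The first step is to establish the spectral inequality for the zig-zag product: if $G_1$ is an $(n_1, d_1, \lambda_1)$-expander and $G_2$ is a $(d_1, d_2, \lambda_2)$-expander, then their zig-zag product (denoted $\mathbin{\boxtimes}$ below) is an $(n_1 d_1, d_2^2, \lambda_1 + \lambda_2 + \lambda_2^2)$-expander. I would prove this by decomposing any function $f$ on the product vertex set with $\langle f, \vec 1\rangle = 0$ into a parallel component $f^{\|}$, constant on each ``cloud'' of size $d_1$, and a perpendicular component $f^{\perp}$. The two $G_2$-steps contract $f^{\perp}$ by a factor of $\lambda_2$ while leaving $f^{\|}$ unchanged, and the $G_1$-step (a permutation across clouds that mimics the normalized adjacency matrix of $G_1$) contracts $f^{\|}$ by $\lambda_1$. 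Combining these via the triangle inequality on operator norms yields the claimed bound $\lambda_1 + \lambda_2 + \lambda_2^2$.

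Next I carry out the recursive construction. Fix once and for all a base graph $H$ on $d^4$ vertices with degree $d$ and $\lambda(H) \le \lambda_0 < 1/5$, obtained by brute-force enumeration over all $d$-regular graphs of this fixed size (a constant-time computation, since the search space is bounded by a constant). Then set $G_1 = H^2$ and $G_{t+1} = G_t^2 \mathbin{\boxtimes} H$; observe that $G_t^2$ is $d^4$-regular on $d^{4t}$ vertices, so its edge-labels match $V(H)$ and the zig-zag product is well-defined. An induction using $\lambda(G^2) = \lambda(G)^2$ together with the zig-zag bound gives $\lambda(G_{t+1}) \le \lambda(G_t)^2 + \lambda_0 + \lambda_0^2 < 1$, uniformly in $t$. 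Each $G_{t+1}$ is computable from $G_t$ in time linear in its edge count, so $G_t$ is built in time polynomial in $d^{4t} = |V(G_t)|$.

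For any sufficiently large target $n$, pick the smallest $t$ with $d^{4t} \ge n$, so that $d^{4t} < d^4 n$, and output $G_t$. This yields an $(n', d^2, \lambda)$-expander with $n' \in [n, d^4 n)$, which suffices in all downstream uses: a constant-factor slack in the vertex count is absorbed by Lemma \ref{ExpanderWalk} without affecting its conclusion. The main obstacle is the spectral inequality for the zig-zag product in the first step: it is the technical heart of \cite{RVW00} and requires the careful orthogonal decomposition above plus operator-norm estimates; once it is in hand, the recursion and polynomial-time accounting are routine bookkeeping.
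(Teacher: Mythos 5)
The paper does not actually prove this lemma; it is a bare citation to \cite{RVW00}, so there is no internal proof to compare against. Your reconstruction of the zig-zag machinery is essentially correct in outline: the spectral bound $\lambda_1+\lambda_2+\lambda_2^2$ for the zig-zag product, the decomposition of a test function into cloud-parallel and cloud-perpendicular parts, squaring to square $\lambda$, the recursion $G_{t+1}=G_t^2\mathbin{\boxtimes}H$ with a brute-forced constant-size base graph $H$ on $d^4$ vertices, the invariant $\lambda(G_t)\le 2/5$ when $\lambda(H)<1/5$, and the polynomial running time. These are faithful to the RVW00 construction.

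However, there is a genuine gap at the end. Your recursion only produces graphs whose vertex counts lie in the sparse set $\{d^{4t}:t\ge 1\}$, and you explicitly stop there, outputting the smallest $G_t$ with $d^{4t}\ge n$, a graph on $n'\in[n,d^4 n)$ vertices rather than on exactly $n$ vertices. That does not prove the lemma as stated, which asserts the existence of an $(n,d,\lambda)$-expander for every sufficiently large $n$. The hand-wave that ``a constant-factor slack in the vertex count is absorbed by Lemma~\ref{ExpanderWalk}'' is not a proof, and in fact the paper's application in Theorem~\ref{kCliqueGapAmp} genuinely uses an expander whose vertex set is exactly the index set $[k]$ of the groups of the {\sc $k$-Clique} instance: the groups of $\Gamma'$ are indexed by length-$t$ walks on $H$, so $|V(H)|$ must equal $k$, not merely be $\Theta(k)$. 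To close the gap you need one of the standard remedies: either merge vertices of $G_t$ in groups of size $\le d^4$ and argue that the resulting multigraph on exactly $n$ vertices retains a spectral gap (with degree inflated by a constant factor and after regularizing with self-loops), or run the richer RVW00 recursion $G_{t}=\bigl(G_{\lceil (t-1)/2\rceil}\otimes G_{\lfloor (t-1)/2\rfloor}\bigr)^2\mathbin{\boxtimes}H$ combined with a variable-size base family so as to hit every sufficiently large $n$. Either fix is routine but must actually be carried out; as written, the proposal proves a weaker statement than the lemma.
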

	
	\subsection{Disperser}

    \begin{definition}[Disperser \cite{CW89, Zuc96a, Zuc96b}]
	    For positive integers $m, k,\ell, r\in \mathbb N$ and constant $\varepsilon\in(0,1)$, an $(m,k,\ell,r,\varepsilon)$-disperser is a collection $\mathcal I$ of $k$ subsets $I_1,\ldots,I_k \subseteq [m]$, each of size $\ell$, such that the union of any $r$ different subsets from the collection has size at least $(1-\varepsilon)m$.
    \end{definition}
    
    Dispersers could be constructed efficiently by probabilistic methods, as in the following Lemma.

    \begin{lemma}
    \label{lem:disperser}
	    For positive integers $m,\ell,r \in \mathbb N$ and constant $\varepsilon \in (0,1)$, let $\ell=\lceil \frac{3m}{\varepsilon r}\rceil$ and let $I_1,\ldots,I_k$ be random $\ell$-subsets of $[m]$. If $\ln k \le \frac{m}{r}$ then $\mathcal I=\{I_1,\ldots,I_k\}$ is an $(m,k,\ell,r,\varepsilon)$-disperser with probability at least $1-e^{-m}$.
    \end{lemma}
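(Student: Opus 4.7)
The plan is a textbook probabilistic argument: set up the bad event, apply a union bound over (i) the choice of $r$ subsets from $\mathcal I$ and (ii) the hypothetical $\varepsilon m$ uncovered points, and then plug in the Chernoff-style estimates on binomial coefficients. Concretely, I would call a collection \emph{bad} if there exist indices $i_1<\cdots<i_r$ with $|I_{i_1}\cup\cdots\cup I_{i_r}|<(1-\varepsilon)m$. Equivalently, the complement of this union contains some set $B\subseteq[m]$ with $|B|=\lceil\varepsilon m\rceil$ that is avoided by all $r$ chosen subsets. So it suffices to bound
\[
\Pr[\mathcal I \text{ bad}] \;\le\; \binom{k}{r}\binom{m}{\lceil \varepsilon m\rceil}\cdot \Bigl(\Pr_{I}\bigl[I\cap B=\emptyset\bigr]\Bigr)^{r},
\]
where $I$ is a uniformly random $\ell$-subset of $[m]$ and $B$ is any fixed set of size $\lceil\varepsilon m\rceil$.

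Next, I would estimate each factor. For the single-set avoidance probability, the standard inequality $\binom{m-\lceil\varepsilon m\rceil}{\ell}/\binom{m}{\ell}\le (1-\varepsilon)^{\ell}\le e^{-\varepsilon\ell}$ gives a bound of $e^{-r\varepsilon\ell}\le e^{-3m}$ after using $\ell\ge 3m/(\varepsilon r)$. For the outer binomial coefficients I would use the crude bounds $\binom{k}{r}\le k^{r}=e^{r\ln k}\le e^{m}$ (which is exactly where the hypothesis $\ln k\le m/r$ is consumed) and $\binom{m}{\lceil\varepsilon m\rceil}\le (e/\varepsilon)^{\varepsilon m}=e^{\varepsilon m\ln(e/\varepsilon)}$.

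Multiplying everything gives $\Pr[\mathcal I \text{ bad}]\le \exp\bigl(m+\varepsilon m\ln(e/\varepsilon)-3m\bigr)$. Since the function $\varepsilon\mapsto \varepsilon\ln(e/\varepsilon)=\varepsilon+\varepsilon\ln(1/\varepsilon)$ is maximized on $(0,1]$ at $\varepsilon=1$ with value $1$, the exponent is at most $m+m-3m=-m$, so the failure probability is at most $e^{-m}$, as required.

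The proof is essentially a routine union-bound-plus-Chernoff calculation; the only mildly delicate point is getting the constants to line up, namely choosing the factor $3$ in $\ell=\lceil 3m/(\varepsilon r)\rceil$ large enough to absorb both the $e^{r\ln k}\le e^{m}$ term (fed by the assumption $\ln k\le m/r$) and the $(e/\varepsilon)^{\varepsilon m}\le e^{m}$ term with a full $e^{-m}$ to spare. I do not anticipate any real obstacle, and I would keep the write-up to a few lines.
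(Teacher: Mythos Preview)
Your argument is correct and mirrors the paper's proof almost exactly: both set up the bad event as ``the union of some $r$ chosen $I_j$'s misses an $\varepsilon m$-sized set,'' union bound over the $\le k^r\le e^m$ choices of indices and over the missed set, and use $\binom{m-\varepsilon m}{\ell}/\binom{m}{\ell}\le (1-\varepsilon)^\ell\le e^{-\varepsilon\ell}$ together with $\ell\ge 3m/(\varepsilon r)$ to get $e^{-3m}$. The only cosmetic difference is that the paper union bounds crudely over all $2^m\le e^m$ subsets for the missed set, whereas you use the sharper $\binom{m}{\lceil\varepsilon m\rceil}\le(e/\varepsilon)^{\varepsilon m}\le e^m$; both land on $e^{2m}\cdot e^{-3m}=e^{-m}$.
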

    
    For the sake of self-containedness, we put a  proof of this lemma in Appendix \ref{appendix:disperser}. 

\section{An Improved Lower Bound for Constant Gap {\sc $k$-Clique} under {\sf ETH}}
\label{sec:lowerbound}

\subsection{Reduction from {\sc 3SAT} to {\sc $k$-VectorSum}}

To prove Theorem \ref{thm:main}, we first need an $f(k)\cdot n^{\Omega(k)}$-time lower bound for {\sc $k$-VectorSum} under {\sf ETH}. Previously, it is known that {\sc $k$-Clique} has no $f(k)\cdot n^{o(k)}$-time algorithms assuming {\sf ETH}~\cite{ChenHKX06}.  Combining this with the reduction from {\sc $k$-Clique} to {\sc $\Theta(k^2)$-VectorSum}~\cite{abboud2014losing}, we only have an $f(k)\cdot n^{\Omega(\sqrt{k})}$-time lower bound for {\sc $k$-VectorSum} under {\sf ETH}. It is an interesting question whether there is an FPT reduction from {\sc $k$-Clique} to {\sc $k'$-VectorSum} with $k'=O(k)$. In this section, we give a reduction directly from {\sc 3SAT} to {\sc $k$-VectorSum}, which suits our purpose. Recall that in the {\sf ETH} statement we can assume without loss of generality that each variable appears in at most 3 clauses, which is a key ingredient in our proof.
	\begin{theorem}\label{3SATtokVS}
		 There is a reduction which, for every integer $k \in \mathbb N$, and every {\sc 3SAT} formula $\varphi$ with $m$ clauses and $n$ variables such that each variable appears in at most 3 clauses, outputs a {\sc $k$-VectorSum} instance $\Gamma=(\mathbb F,d,V_1,\ldots,V_k)$ with the following properties in $2^{O(n/k)}$ time.
		\begin{itemize}
			\item $\mathbb F=\mathbb F_5$.
			\item $d=O(n)$.
			\item For any $i\in [k]$, distinct $\vec u,\vec v\in V_i$ and any $a\in \mathbb F_5\setminus \{0\}$, $\vec u\ne a\cdot \vec v$.
			\item For any $i \in [k]$, distinct $\vec u,\vec v,\vec w \in V_i$ and any $ a\in \mathbb F_5 \setminus\{0\}, \vec u-\vec w \ne a \cdot (\vec w-\vec v)$.
			\item (Completeness.) If $\varphi$ is satisfiable, then there exists $\vec v_1 \in V_1,\ldots,\vec v_k \in V_k$ such that $\sum_{i=1}^k \vec v_i=\vec 0$.
			\item (Soundness.) If $\varphi$ is not satisfiable, then for any $\vec v_1 \in V_1,\ldots,\vec v_k \in V_k$, $\sum_{i=1}^k\vec v_i \neq \vec 0$.
		\end{itemize}
	\end{theorem}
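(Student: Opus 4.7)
The plan is to use a classical split-and-list reduction over $\mathbb F_5$, combined with a ``guess-and-verify'' gadget for clause satisfaction and a quadratic identifier tag appended to every vector to enforce the two non-collinearity properties.

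First I would partition the $n$ variables of $\varphi$ into $k$ disjoint blocks $B_1,\ldots,B_k$ of size $n/k$, and for each clause $C_{j'}$ pick a primary block $\pi(j')\in[k]$ containing at least one of its variables. Since every variable appears in at most three clauses, $m\le n$ and each block is primary for at most $3n/k$ clauses. For each block $i$, $V_i$ is enumerated by triples $(\alpha,\vec g,j)$ where $\alpha:B_i\to\{0,1\}$ is a partial assignment, $\vec g=(g_{j'})_{\pi(j')=i}\in\{1,2,3\}^{O(n/k)}$ is a guess for the number of true literals in each primary clause, and $j\in\mathbb F_{5^T}^\times$ (with $T=O(n/k)$) is a free identifier, yielding $|V_i|=2^{n/k}\cdot 3^{O(n/k)}\cdot(5^T-1)=2^{O(n/k)}$ vectors per block in $2^{O(n/k)}$ time. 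The associated vector $v_{i,\alpha,\vec g,j}\in\mathbb F_5^{m+2T}$ has clause coordinates
\[
v(c_{j'})=\#\{\text{true literals of }C_{j'}\text{ in }B_i\text{ under }\alpha\}-g_{j'}\cdot[\pi(j')=i]\pmod 5
\]
and tag coordinates $(j,j^{2})\in\mathbb F_{5^T}^{2}\cong\mathbb F_5^{2T}$.

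Summing one vector per block, the $c_{j'}$-coordinate becomes $(\text{total true literals of }C_{j'})-g_{\pi(j')}$, which vanishes on every clause iff every clause has between one and three true literals matching the primary guess, iff $\varphi$ is satisfiable. The tag-coordinate sum vanishes iff $\sum_i j_i=0$ and $\sum_i j_i^{2}=0$, a codimension-$2$ condition on $(\mathbb F_{5^T}^\times)^{k}$ that is jointly satisfiable for $k\ge 3$, so whenever the clause part can be made to vanish the tag part can be arranged separately. This yields completeness and soundness; the final dimension is $d=m+2T=O(n)$, the alphabet is $\mathbb F_5$, and the total running time is $2^{O(n/k)}$.

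The main technical obstacle is verifying the two non-collinearity properties. If $(j,j^{2})=a(j',(j')^{2})$ in $\mathbb F_{5^T}^{2}$ with $j,j'\in\mathbb F_{5^T}^\times$ and $a\in\mathbb F_5^\times$, then $j=aj'$ and $j^{2}=a(j')^{2}$ force $a(a-1)(j')^{2}=0$, so $a\in\{0,1\}$, ruling out $a\ne 1$. Likewise the arithmetic-progression identity on both tag coordinates reduces to $a(a+1)(j_w-j_v)^{2}=0$, eliminating $a\in\{1,2,3\}$ whenever $j_v\ne j_w$, while the leftover $a=4$ coincides with the distinctness-contradicting $\vec u=\vec v$. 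Hence both non-collinearity properties hold as soon as distinct vectors in $V_i$ carry distinct identifiers; achieving this while retaining enough flexibility to solve the zero-sum tag constraint requires injecting the $(\alpha,\vec g)$-parameter into part of the tag while keeping another part free (for instance via an auxiliary identifier $\sigma\in\mathbb F_{5^T}^\times$ contributing its own quadratic coordinates $(\sigma,\sigma^{2})$), a careful but manageable design task which I expect to be the crux of the formal argument. The boundary case $k\le 2$ is vacuous for the downstream lower bound.
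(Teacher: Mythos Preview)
Your approach is quite different from the paper's and contains a genuine gap in the tag mechanism. The paper partitions the \emph{clauses} (not the variables) into $k$ parts $\mathcal C_1,\ldots,\mathcal C_k$ and lets $V_i$ consist of partial assignments satisfying $\mathcal C_i$; coordinates are indexed by variables shared between parts, and each coordinate of each vector lies in $\{0,1,-1\}$ by construction. The non-collinearity properties then fall out immediately from the observation that two vectors in the same $V_i$ can only differ at a coordinate where one value is $0$ and the other is $\pm 1$ --- no auxiliary tag is needed at all.

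In your construction, the clause coordinates do \emph{not} by themselves enforce Property~4: for instance, with two primary clauses each having a single literal in $B_i$, the triples $(\alpha,\vec g)$ giving clause parts $(0,4),(4,4),(2,4)\in\mathbb F_5^2$ satisfy $\vec u-\vec w=3(\vec w-\vec v)$. So the quadratic tag is essential, and as you correctly note, the tag argument for Property~4 requires that distinct vectors in $V_i$ carry distinct identifiers~$j$. The difficulty is that this requirement is in direct tension with completeness. If $j$ is an injective function of $(\alpha,\vec g)$, then once a satisfying assignment fixes the $(\alpha_i,\vec g_i)$'s, the values $\sum_i j_i$ and $\sum_i j_i^2$ are completely determined and will generically be nonzero, so the tag part of $\sum_i \vec v_i$ cannot be made to vanish. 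Your proposed fix of adding a separate free quadratic tag $(\sigma,\sigma^2)$ does not help: you would then need both $\sum_i e_i=\sum_i e_i^2=0$ (for the $(\alpha,\vec g)$-determined part $e_i$) and $\sum_i\sigma_i=\sum_i\sigma_i^2=0$; the latter can be arranged but the former still cannot. Encoding $(\alpha,\vec g,\sigma)$ jointly into a single $j$ via some bijection likewise gives you no algebraic control over $\sum j_i$ and $\sum j_i^2$ once $(\alpha_i,\vec g_i)$ are fixed.

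In short, the ``careful but manageable design task'' you defer is in fact the whole problem, and there is no evident way to close it without either abandoning the target $\vec 0$ or restructuring the encoding. The paper's clause-partition approach sidesteps the issue entirely by building the non-collinearity into the $\{0,\pm1\}$-valued encoding itself.
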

	\begin{remark}
	    Note if the size of the produced {\sc $k$-VectorSum} instance $N$ appears to be only $2^{o(n/k)}$, we can use brute force to solve it in $N^k=2^{o(n)}$ time, thus solve {\sc 3SAT} in $2^{o(n)}$ time. Therefore, we only need to consider the case $N=2^{\Theta(n/k)}$ without loss of generality, and in this case $d=O(n)=O(k \log N)$.
	\end{remark}

	\begin{proof}[Proof of Theorem \ref{3SATtokVS}]

		Let $C=\mathcal{C}_1 \dot{\cup} \ldots \dot{\cup} \mathcal{C}_k$ be a partition of the clauses into $k$ approximately equal-sized parts. We will let vectors in $V_i$ represent partial satisfying assignments for $\mathcal{C}_i$, and use entries of vectors to check consistency of those partial assignments.
		
		Define $X$ to be the set of variables appearing in exactly two different parts and define $Y$ to be the set of variables appearing in three different parts. Let $d = |X| + 2|Y|$, we associate one entry of vector to each variable $x \in X$ and two entries to each variable $y \in Y$. In the following, we abuse notation a bit and use $\vec v[x]$ to denote the entry in a vector $\vec v \in \mathbb F^d$ associated to a variable $x \in X$, and use $\vec v[y,1],\vec v[y,2]$ to denote the two entries associated to a variable $y \in Y$.
		
		The construction of vector set $V_i$ proceeds as follows. Let $Z_i$ be the set of variables appearing in $\mathcal{C}_i$. For an assignment $\tau: Z_i \to \{0,1\}$ which satisfies all clauses in $\mathcal{C}_i$, we map it to a vector $\vec v \in \mathbb F^d$ in the following way.
		
		Let $x\in Z_i \cap X$ be a variable appearing in $\mathcal{C}_{j_1}$ and $\mathcal{C}_{j_2}$ ($j_1<j_2$),
		\begin{itemize}
			\item in case that $\tau(x) = 0$, set $\vec v[x]=0$. 
			\item in case that $\tau(x) = 1$, set $\vec v[x]=1$ if $i=j_1$, and set $\vec v[x]=-1$ if $i=j_2$.
		\end{itemize}

		Let $y\in Z_i \cap Y$ be a variable appearing in $\mathcal{C}_{j_1},\mathcal{C}_{j_2}$ and $\mathcal{C}_{j_3}$ ($j_1<j_2<j_3$),
		\begin{itemize}
			\item in case that $\tau(y)=0$, set $\vec v[y,1]=0$ and $\vec v[y,2]=0$.
			\item in case that $\tau(y)=1$, set $\vec v[y,1]=1$ and $\vec v[y,2]=1$ if $i=j_1$; set $\vec v[y,1]=-1$ and $\vec v[y,2]=0$ if $i=j_2$; and set $\vec v[y,1]=0$ and $\vec v[y,2]=-1$ if $i=j_3$.
		\end{itemize}
		
		For the remaining entries of $\vec v$ (which are associated to variables in $(X \cup Y) \setminus Z_i$), set them to be 0.
		
		It's easy to see the whole reduction runs in $2^{O(n/k)}$ time, and the dimension $d=O(n)$.

		Now we prove the third and the fourth properties.
		
		For two distinct vectors $\vec u,\vec v \in V_i$, suppose $\vec u[j] \ne \vec v[j]$. It must be the case that one of them is 0 and the other is $\pm 1$. Thus they still differ after being multiplied by any $a\in \mathbb F_5\setminus \{0\}$.
		
		For three distinct vectors $\vec u,\vec v,\vec w \in V_i$, suppose $\vec u[j]\ne \vec w[j]$, then either $\vec v[j]=\vec w[j]$ or $\vec v[j]=\vec u[j]$. In the former case, $\vec u[j]-\vec w[j] \ne 0=\vec w[j]-\vec v[j]$, so they still differ after being multiplied by any $a \in \mathbb F \setminus \{0\}$. In the latter case, suppose $\vec u-\vec w=a \cdot (\vec w-\vec v)$, then $\vec u[j]-\vec w[j]=a \cdot (\vec w[j]-\vec v[j])$ will lead to $a=-1$ and thus $\vec u=\vec v$, a contradiction. Therefore, $\vec u-\vec w \ne a \cdot (\vec w-\vec v)$ for any $a \in \mathbb F_5 \setminus \{0\}$.
		
		Next follows the proof of completeness and soundness.
		
		~
		
		\noindent \textbf{Completeness.} If the {\sc 3SAT} formula $\varphi$ has a satisfying assignment $\tau$, we can pick one vector $\vec v_i$ from each $V_i$ according to the restriction of $\tau$ on $Z_i \cap (X \cup Y)$. Let $\vec v=\sum_{i=1}^k \vec v_i$ be the sum of picked vectors.
		
		For a variable $x \in X$, let $\mathcal{C}_{j_1},\mathcal{C}_{j_2} (j_1<j_2)$ be the two clause parts in which $x$ appears,
			\begin{itemize}
				\item in case that $\tau(x)=0$, $\vec v[x]=0$ since this entry equals to 0 in all vectors.
				\item in case that $\tau(x)=1$, $\vec v[x]=1+(-1)=0$ where 1 comes from $\vec v_{j_1}[x]$ and $-1$ comes from $\vec v_{j_2}[x]$.
			\end{itemize}
			
		For a variable $x \in Y$, let $\mathcal{C}_{j_1},\mathcal{C}_{j_2},\mathcal{C}_{j_3} (j_1<j_2<j_3)$ be the three clause parts in which $x$ appears.
			\begin{itemize}
				\item in case that $\tau(x)=0$, $\vec v[x,1]=\vec v[x,2]=0$ since these entries equal to 0 in all vectors.
				\item in case that $\tau(x)=1$, $\vec v[x,1]=1+(-1)=0$ where 1 comes from $\vec v_{j_1}[x,1]$ and $-1$ comes from $\vec v_{j_2}[x,1]$, and $\vec v[x,2]=1+(-1)=0$ where 1 comes from $\vec v_{j_1}[x,1]$ and $-1$ comes from $\vec v_{j_3}[x,1]$.
			\end{itemize}
		
		~
		
		\noindent \textbf{Soundness.} If the {\sc 3SAT} formula $\varphi$ has no satisfying assignments, any collection of partial assignments satisfying individual clause parts must be inconsistent on some variable in $X \cup Y$. For any $\vec v_1 \in V_1, \ldots, \vec v_k \in V_k$, let $\vec v=\sum_{i=1}^k \vec v_i$.
		
		Suppose assignments for a variable $x \in X$ which appears in $\mathcal{C}_{j_1}$ and $\mathcal{C}_{j_2}$ are inconsistent, there must be one $0$ and one $\pm 1$ in $\vec v_{j_1}[x]$ and $\vec v_{j_2}[x]$. Since this entry equals to 0 in all other vectors, it results that $\vec v[x] \ne 0$.
		
		Suppose assignments for a variable $x \in Y$ which appears in $\mathcal{C}_{j_1},\mathcal{C}_{j_2}$ and $\mathcal{C}_{j_3}$ ($j_1<j_2<j_3$) are inconsistent. If the values for $x$ specified by $\vec v_{j_1}$ and $\vec v_{j_2}$ are inconsistent, there must be one $0$ and one $\pm 1$ in $\vec v_{j_1}[x,1]$ and $\vec v_{j_2}[x,1]$, while in all other vectors this entry equals to 0, thus $\vec v[x,1] \ne 0$.  Otherwise the value for $x$ specified by $\vec v_{j_1}$ and $\vec v_{j_3}$ must be inconsistent, there must be one $0$ and one $\pm 1$ in $\vec v_{j_1}[x,2]$ and $\vec v_{j_3}[x,2]$, while in all other vectors this entry equals to 0, thus $\vec v[x,2] \ne 0$.
		
		Therefore, $\sum_{i=1}^k \vec v_i \ne \vec 0$ as desired.
	\end{proof}

\subsection{Reduction from {\sc $k$-VectorSum} to Constant Gap {\sc $k$-Clique}}

	\begin{theorem}\label{kVStokClique}
		There is an FPT reduction which, given as input a {\sc $k$-VectorSum} instance $\Gamma_0=(\mathbb F,d,V_1,\ldots,V_k)$ with the following properties:
		\begin{itemize}
			\item $\mathbb F=\mathbb F_5$,
			\item $d=O(k\log n)$ where $n=\sum_{i=1}^k |V_i|$ denotes instance size,
			\item for any $i\in [k]$, distinct $\vec u,\vec v\in V_i$ and any $a\in \mathbb F_5\setminus \{0\}$, $\vec u\ne a\cdot \vec v$,
			\item for any $i \in [k]$, distinct $\vec u,\vec v,\vec w \in V_i$ and any $ a\in \mathbb F_5 \setminus\{0\}, \vec u-\vec w \ne a \cdot (\vec w-\vec v)$.
		\end{itemize}
		outputs a {\sc $k'$-Clique} instance $G=(V,E)$ such that
		\begin{itemize}
			\item $k' \le c^k$ for some constant $c$,
			\item (Completeness.) if $\Gamma_0$ is a yes-instance of {\sc $k$-VectorSum}, then $G$ contains a clique of size $k'$,
			\item (Soundness.) if $\Gamma_0$ is a no-instance of {\sc $k$-VectorSum}, then $G$ doesn't contain a clique of size $\varepsilon k'$ for some constant $\varepsilon < 1$.
		\end{itemize}
	\end{theorem}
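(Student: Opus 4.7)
My plan is to build the $k'$-Clique instance in two phases: first reduce $\Gamma_0$ to a constant-gap CSP $\Psi$ on $5^{2k}$ variables over an alphabet of size $2^{O(k)}n^{O(1)}$ with constant arity, and then apply the FGLSS transformation to produce the $k'$-Clique instance with $k' = |\Psi| = 2^{O(k)}$.

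For $\Psi$, I would set $\ell = \Theta(k + \log n)$ and sample $A_1,\ldots,A_\ell \in \mathbb{F}^{k \times d}$ independently and uniformly at random. The variable set is $Y = \{y_{\vec\alpha,\vec\beta} : \vec\alpha,\vec\beta \in \mathbb{F}^k\}$ with alphabet $\mathbb{F}^\ell$, so $|Y| = 5^{2k}$. A $k$-VectorSum solution $(\vec v_1,\ldots,\vec v_k)$ induces the intended assignment
\[
y_{\vec\alpha,\vec\beta} = \Bigl(\vec\alpha A_p \sum_{i=1}^k \beta_i \vec v_i\Bigr)_{p=1}^\ell = \sum_{i,j \in [k]} \alpha_j \beta_i \, C_{i,j},
\]
with $C_{i,j} = (A_1[j]\vec v_i,\ldots,A_\ell[j]\vec v_i) \in \mathbb{F}^\ell$; each coordinate is a total-degree-$2$ polynomial in the $2k$ variables $(\vec\alpha,\vec\beta)$. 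I would impose four kinds of constant-arity local constraints, paralleling the three tests sketched in the introduction: (i) a low-degree test along random lines in $\mathbb{F}^{2k}$, invoked via Lemma \ref{LDTVector} with $d=2$; (ii) bilinearity tests that kill $\alpha_j\alpha_{j'}$- and $\beta_i\beta_{i'}$-terms (e.g., linearity tests in $\vec\beta$ with $\vec\alpha$ fixed, and symmetrically); (iii) for each $i \in [k]$, a membership test that for random $\vec\alpha,\vec\beta$ checks $y_{\vec\alpha,\vec\beta+\vec e_i}-y_{\vec\alpha,\vec\beta} \in \{(\vec\alpha A_p\vec v)_{p=1}^\ell : \vec v\in V_i\}$; and (iv) a sum test checking $y_{\vec\alpha,\vec\beta+\vec 1}-y_{\vec\alpha,\vec\beta}=\vec 0$. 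The total number of constraints is $2^{O(k)}$.

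Completeness is immediate. For soundness, I would assume $(1-\delta)$-fraction of constraints hold for small constant $\delta$ and decode a $k$-VectorSum solution. Test (i) plus Lemma \ref{LDTVector} yields a genuine degree-two polynomial $P:\mathbb{F}^{2k}\to\mathbb{F}^\ell$ that is $O(\delta)$-close to the assignment, and I work with $P$ via self-correction. Tests (ii) force $P(\vec\alpha,\vec\beta) = \sum_{i,j}\alpha_j\beta_i C_{i,j}$. Tests (iii), specialized to $P$ and averaged over $\vec\alpha$, force each tuple $(C_{i,1},\ldots,C_{i,k})$ to agree with the image of some $\vec v_i \in V_i$ under the random projection $\vec v \mapsto ((\vec\alpha A_p\vec v)_{p,\vec\alpha})$. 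Choosing $\ell$ sufficiently large relative to $k+\log n$, combined with the \emph{no proportional pair} and \emph{no proportional difference} hypotheses on $V_i$ from Theorem \ref{3SATtokVS} (which rule out nontrivial kernels of the form $A_p(\vec u-a\vec v)=0$ or $A_p((\vec u-\vec w)-a(\vec w-\vec v))=0$), ensures — via a union bound over pairs and triples in $V_i$ and a Schwartz-Zippel-type argument — that the random projection is injective on $V_i$ with probability $1-n^{-\omega(1)}$, so $\vec v_i$ is uniquely decoded. Test (iv) then forces $\sum_i \vec v_i=\vec 0$.

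The main obstacle is the soundness analysis, particularly the injectivity-of-random-projection step that decodes a unique element of each $V_i$ and the careful propagation of test success rates through self-correction and bilinearity. Once $\Psi$ has a constant completeness/soundness gap, the FGLSS transformation is routine: create one group per constraint of $\Psi$, with vertices being partial assignments to its variables that satisfy it, and edges connecting consistent partial assignment pairs from different groups. The yes-case produces a $k'$-clique with $k' = |\Psi| = 2^{O(k)} \le c^k$ for a constant $c$, while in the no-case any clique of size $s$ corresponds to $s$ simultaneously satisfiable constraints (by extending its consistent partial assignment arbitrarily to a global assignment), so the CSP soundness $\eta > 0$ bounds the max clique by $(1-\eta)k'$, giving the required $\varepsilon = 1 - \eta < 1$.
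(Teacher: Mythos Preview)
Your proposal is essentially the paper's approach: the same variable set $\{y_{\vec\alpha,\vec\beta}\}$, the same four test families (degree-$2$ line test, bilinearity, membership in $\{f(\vec\alpha,\vec v):\vec v\in V_i\}$, and the sum-to-zero check), and the same soundness pipeline (decode to a degree-$2$ polynomial, force it bilinear, extract a unique $\vec v_i\in V_i$ per coordinate via the injectivity properties of the projection, then verify $\sum_i\vec v_i=\vec 0$).

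Two small gaps and one implementation difference are worth noting. First, you sample the matrices $A_p$ at random; for an FPT \emph{reduction} you need determinism, and the paper handles this via Lemma~\ref{A0}, which finds matrices with the required injectivity properties in $\mathrm{poly}(n,k)$ time. Second, in a pure FGLSS graph the four test families have wildly different counts ($5^{4k}$ versus $5^{2k}$), so a $(1-\varepsilon)k'$-clique need not touch the rarer tests at all; you must replicate constraints to balance the types, as the paper does explicitly with its copy multiplicities $2,\ |\mathbb F|^k,\ 4|\mathbb F|^{2k}$. Third, the paper does \emph{not} use a pure FGLSS graph: it introduces separate ``variable'' vertices (type-3) and encodes the arity-$2$ tests (iii) and (iv) directly as \emph{edges} between them, so that whenever two variable vertices lie in the clique the arity-$2$ constraint between them is automatically satisfied. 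This is what powers the $5$-cycle argument in Lemmas~\ref{B2} and~\ref{B4}. Your pure FGLSS route still works---once the assignment is corrected to the bilinear $P$, the quantity $P(\vec\alpha,\vec\beta+\vec e_u)-P(\vec\alpha,\vec\beta)=\sum_j\alpha_jC_{j,u}$ is independent of $\vec\beta$, so a union bound over $\vec\beta$ replaces the $5$-cycle pigeonhole---but you should spell this out, since it is exactly the step the paper's hybrid construction is engineered to finesse.
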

	
	The first step of the reduction involves $\ell=2k+4 \log n$ matrices $A_1,\ldots,A_{\ell} \in \mathbb F^{k \times d}$. For $\alpha\in \mathbb F^{k},v\in \mathbb F^{d}$, define bilinear function $f(\alpha,v)=(\langle \alpha, A_1v\rangle,\ldots,\langle \alpha,A_{\ell}v\rangle)\in \mathbb F^{\ell}$, where $\langle \cdot, \cdot \rangle$ denotes inner product.
	
	\begin{lemma}[\cite{Lin21}]\label{A0}
		We can find $\ell=2k+4 \log n$ matrices $A_1, A_2, \cdots, A_{\ell}$ in time polynomial in $n,k$, which satisfy the following properties:
		\begin{enumerate}
			\item for any nonzero vector $\vec v \in \mathbb F^d$, there exists $i \in [\ell]$ such that $A_i\vec v \ne \vec 0$,
			\item for any $ i \in [k]$, distinct $ \vec u,\vec v\in V_i$ and nonzero $\alpha \in \mathbb F^k$, $f(\alpha,\vec u) \ne f(\alpha,\vec v)$,
			\item for any $i \in [k]$, distinct $\vec u,\vec v,\vec w \in V_i$ and $\alpha,\alpha'\in \mathbb F^k, f(\alpha,\vec u)+f(\alpha',\vec v) \ne f(\alpha+\alpha',\vec w)$.
		\end{enumerate}
	\end{lemma}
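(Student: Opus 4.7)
The plan is the probabilistic method followed by derandomization: sample $A_1,\ldots,A_\ell$ independently and uniformly at random in $\mathbb{F}_5^{k\times d}$, then show that each of the three properties fails with probability strictly less than $1$ by a union bound. A deterministic $\mathrm{poly}(n,k)$-time construction will then follow by the method of conditional expectations over the $\ell kd$ entries.

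For Property 1, each fixed nonzero $\vec v\in\mathbb{F}^d$ satisfies $A_i\vec v=\vec 0$ with independent probability $5^{-k}$, so the bad event has probability $5^{-\ell k}$; union-bounding over fewer than $5^d$ nonzero $\vec v$'s and using $\ell k\ge 2k^2+4k\log n$ against $d=O(k\log n)$ comfortably covers this. For Property 2, by bilinearity the violation amounts to $\alpha^T A_i(\vec u-\vec v)=0$ for every $i$, where $\alpha\ne\vec 0$ and $\vec u-\vec v\ne\vec 0$; each such scalar is uniform in $\mathbb{F}_5$, giving failure probability $5^{-\ell}$ per tuple, and the union bound over at most $kn^2\cdot 5^k$ tuples is again absorbed by $\ell=2k+4\log n$.

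The main obstacle is Property 3. Setting $\vec p:=\vec u-\vec w$ and $\vec q:=\vec v-\vec w$ (both nonzero since $\vec u,\vec v,\vec w$ are distinct), bilinearity rewrites the violation as $\alpha^T A_i\vec p+\alpha'^T A_i\vec q=0$ for every $i$. To apply the random-matrix argument, the key step is to verify that the linear form $A_i\mapsto \alpha^T A_i\vec p+\alpha'^T A_i\vec q$ on $\mathbb{F}_5^{k\times d}$ is not identically zero whenever $(\alpha,\alpha')\ne\vec 0$. Its $(r,s)$-coefficient is $\alpha_r\vec p[s]+\alpha'_r\vec q[s]$, so simultaneous vanishing forces $\alpha_r\vec p+\alpha'_r\vec q=\vec 0$ for every $r$. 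If $\vec p,\vec q$ are linearly independent this forces $\alpha=\alpha'=\vec 0$; otherwise $\vec q=c\vec p$ for some $c\ne 0$, which rearranges to $\vec u-\vec w=-c^{-1}(\vec w-\vec v)$ with $-c^{-1}\in\mathbb{F}_5\setminus\{0\}$, contradicting the fourth hypothesis of the lemma (inherited from Theorem~\ref{3SATtokVS}). Hence per $i$ the vanishing probability is $1/5$, the bad event has probability $5^{-\ell}$ per tuple, and a union bound over at most $kn^3\cdot 5^{2k}$ tuples is absorbed by the same $\ell$.

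For the derandomization, I would reformulate Properties 2 and 3 as polynomially many rank conditions — one per $(i,\vec u,\vec v)$ and per $(i,\vec u,\vec v,\vec w)$ — on explicit matrices whose entries are linear in the $A_i$'s, absorbing the inner quantifiers over $\alpha,\alpha'$ into full-rank requirements on single matrices. Each such condition is a nonzero polynomial in the entries of the $A_i$'s by the random-matrix argument, so the standard method of conditional expectations picks each entry in turn to keep the product of all these polynomials nonzero, yielding a deterministic polynomial-time construction.
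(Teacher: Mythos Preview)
The paper does not give its own proof of Lemma~\ref{A0}; the result is imported from~\cite{Lin21}. Your probabilistic existence argument is sound and identifies the key point for Property~3: the no-collinearity hypothesis inherited from Theorem~\ref{3SATtokVS} is precisely what guarantees the linear form $A\mapsto\alpha^{T}A\vec p+{\alpha'}^{T}A\vec q$ is not identically zero whenever $(\alpha,\alpha')\ne\vec 0$, so that each coordinate vanishes with probability $1/5$.

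Two gaps remain. First, the Property~1 union bound needs $\ell k\gtrsim d$, and with $\ell k=2k^{2}+4k\log n$ against $d=O(k\log n)$ this holds only if the hidden constant in the $O(\cdot)$ is small enough; ``comfortably covers'' overstates the case without checking. Second, and more substantively, your derandomization conflates two techniques. ``Method of conditional expectations picks each entry in turn to keep the product of polynomials nonzero'' is not a standard argument: conditional expectations tracks an expected \emph{count} of bad events, and for your $\mathrm{poly}(n)$ rank conditions this would require computing $\Pr[\text{rank deficient}\mid\text{partial assignment}]$ efficiently, which you have not shown. If instead you run conditional expectations on the fine-grained events (one per tuple $(\alpha,\vec u,\vec v)$ or $(\alpha,\alpha',\vec u,\vec v,\vec w)$), the conditional probabilities become trivial to compute, but the event count is $5^{\Theta(k)}\cdot\mathrm{poly}(n)$, giving an $f(k)\cdot\mathrm{poly}(n)$ algorithm rather than the $\mathrm{poly}(n,k)$ the lemma claims; Property~1 is worse still, with $5^{d}$ events unless handled separately (e.g., by planting an identity block in the stacked matrix). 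The FPT bound suffices for the downstream reduction of Theorem~\ref{kVStokClique}, but it does not match the lemma as stated.
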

	
	The reduction then goes as follows. For every $\alpha,\beta \in \mathbb F^k$, we introduce a variable $x_{\alpha,\beta}$ which takes value in $\mathbb F^{\ell}$. In the yes-case of {\sc $k$-VectorSum}, there exists $\vec v_1 \in V_1,\ldots,\vec v_k \in V_k$ such that $\sum_{i=1}^k \vec v_i = \vec 0$, and we expect $x_{\alpha,\beta}$ to be $f(\alpha,\sum_{i=1}^k \beta_i\vec v_i)$, in other words, $\sum_{i=1}^k \sum_{j=1}^k \alpha_i\beta_j (A_1[i]\vec v_j,\ldots,A_{\ell}[i]\vec v_j)$ where $A_w[i]$ indicates the $i$-th row of the $w$-th matrix. Note that $f(\alpha,\sum_{i=1}^k \beta_i\vec v_i)$ is a degree-2 polynomial of $\alpha$ and $\beta$.
	
	For simplicity of notation, we will use $e_i\in \mathbb F^k$ to denote the $i$-th unit vector, and use $\mathbf{1}\in \mathbb F^k$ to denote the all-one vector.
	
	We want to apply four types of tests on those variables:
	\begin{enumerate}
		\item Check whether $x:\mathbb F^{2k} \to \mathbb F^\ell$ is a vector-valued degree-2 polynomial. This can be done by the low-degree test described in Lemma \ref{LDTVector}. For each $\vec\alpha,\vec\beta ,\vec t_1,\vec t_2\in \mathbb F^{k}$, check whether $\{x_{\vec\alpha+i\cdot \vec t_1,\vec\beta+i\cdot \vec t_2}|0 \le i \le 3\}$ are point values of a degree-2 polynomial. Each test is applied on 4 variables, and we say the arity of each such test is 4 in shorthand.
		\item Check whether $x$ which maps $(\alpha,\beta)$ to $x_{\alpha,\beta}$ is linear in both $\alpha$ and $\beta$, i.e., whether $x_{\alpha+\alpha',\beta}=x_{\alpha,\beta}+x_{\alpha',\beta},\forall \alpha,\alpha',\beta \in \mathbb F^k$, and $x_{\alpha,\beta+\beta'}=x_{\alpha,\beta}+x_{\alpha,\beta'},\forall \alpha,\beta,\beta' \in \mathbb F^k$. The arity of each such test is 3.
		\item For each $u \in [k],\alpha,\beta \in \mathbb F^k$, check whether
		$x_{\alpha,\beta+e_u}-x_{\alpha,\beta}=f(\alpha,\vec v) \in \mathbb F^{\ell}$ for some $\vec v \in V_u$. The arity of each such test is 2.
		\item For each $\alpha,\beta \in \mathbb F^k$, check whether $x_{\alpha,\beta+\mathbf{1}}-x_{\alpha,\beta}=\vec 0$. The arity of each such test is 2.
	\end{enumerate}
	
	\noindent\textbf{Construction of the Graph.} The vertices are divided into three types.
	Vertices in each type are further partitioned into groups, and each group forms an independent set:
	\begin{description}
		\item[type-1] There are $(|\mathbb F|^{2k})^2$ groups, each of which indicates a test of type 1, and consists of $\le |\mathbb F|^{4\ell}$ vertices corresponding to all satisfying assignments of the 4 variables in the test. 
		
		\item[type-2] There are $2(|\mathbb F|^{k})^3$ groups, each of which indicates a test of type 2, and consists of $\le |\mathbb F|^{2\ell}$ vertices corresponding to all satisfying assignments of the 3 variables in the test.
		
		\item[type-3] There are $|\mathbb F|^{2k}$ groups indexed by $(\alpha,\beta)\in \mathbb F^{2k}$, each consisting of $\mathbb F^{\ell}$ vertices which correspond to $\mathbb F^{\ell}$ possible assignments for variable $x_{\alpha,\beta}$.
	\end{description}
	
	We make copies of vertices, so that the numbers of type-1 groups and type-2 groups are the same, and their sum equals to the number of type-3 groups. Specifically, the three types of vertices are made into $2,|\mathbb F|^{k},4\mathbb F^{2k}$ copies, respectively. The total number of groups is therefore $k'=8|\mathbb F|^{4k}$, while the total number of vertices is at most $2|\mathbb F|^{4k+4\ell}+2|\mathbb F|^{4k+2\ell}+4|\mathbb F|^{4k+\ell}=|\mathbb F|^{O(k+\log n)}$.
	
	The edges are specified as follows:
	\begin{enumerate}
		\item A variable (type-3) vertex and a test (type-1/2) vertex are linked if and only if they specify the same assignment for the variable, or the test is irrelevant of that variable.
		\item Two test vertices are linked if and only if they are consistent in all variables which appear in both tests. 
		\item Two variable vertices are linked if and only if the assignments specified by them can pass the above-mentioned third and fourth tests, or there is no such a test between them.
	\end{enumerate}
	
	Two different copies of a same vertex are always linked.
	
	~
	
	\noindent\textbf{Proof of Completeness.} If $\Gamma_0$ is a yes-instance of {\sc $k$-VectorSum}, i.e., there exists $\vec v_1 \in V_1,\ldots, \vec v_k \in V_k$ such that $\sum_{i=1}^k \vec v_i=\vec 0$, by letting $x_{\alpha,\beta}$ take value $f(\alpha,\sum_{i=1}^k \beta_i \vec v_i)$, it's easy to see that such an assignment can pass all tests. Therefore, by picking a vertex from each group accordingly, one can obtain a clique of size $k'$.
	
	~
	
	\noindent\textbf{Proof of Soundness.} If $\Gamma_0$ is a no-instance of {\sc $k$-VectorSum}, we will prove that there is no clique of size $\ge (1-\varepsilon)k'$ in $G$ for some  small constant $\varepsilon$.
	
	Prove by contradiction. If there is a clique of size $\ge (1-\varepsilon)k'$, it must contain vertices from $\ge (1-2\varepsilon)$ fraction of type-3 groups which represent variables, vertices from $\ge (1-4\varepsilon)$ fraction of type-1 groups which represent low-degree tests, vertices from $\ge (1-8\varepsilon)$ fraction of type-2 groups which represent linearity tests $x_{\alpha+\alpha',\beta}=x_{\alpha,\beta}+x_{\alpha',\beta}$ and vertices from $\ge (1-8\varepsilon)$ fraction of type-2 groups which represent linearity tests $x_{\alpha,\beta+\beta'}=x_{\alpha,\beta}+x_{\alpha,\beta'}$.
	
	In the following, we will denote by $\overline x_{\alpha,\beta}$ the assignment for $x_{\alpha,\beta}$ specified by the clique. If no assignment for $x_{\alpha,\beta}$ is specified, set $\overline x_{\alpha ,\beta}$ arbitrarily as long as it is consistent with all selected test vertices (it is always possible since the selected test vertices are themselves consistent). As almost all low-degree tests and linearity tests are passed, we have:
	
	\begin{lemma}\label{B1}
		If $\varepsilon<\frac{c}{16}$ where $c$ is the constant in Lemma \ref{LDTVector}, and there is a clique of size $\ge (1-\varepsilon)k'$, then
		 the function $\pi(\alpha,\beta)=\overline x_{\alpha,\beta}$ ($\alpha,\beta\in\mathbb{F}^k$)
		is $9\varepsilon$-close to a function on $\alpha,\beta$ of the form 	\[\sum_{i=1}^k\sum_{j=1}^k\alpha_i\beta_jC_{i,j}\] where $C_{i,j}\in \mathbb F^l$ denotes the coefficient of the term $\alpha_i\beta_j$.
	\end{lemma}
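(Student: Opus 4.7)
The plan is to combine the low-degree test with the two families of linearity tests so as to first approximate $\pi$ by a fixed degree-2 polynomial, and then pin down the monomial structure of that polynomial.

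First I would apply Lemma \ref{LDTVector} with $d=2$ over $\mathbb F=\mathbb F_5$ (so $d<|\mathbb F|/2$). Each type-1 group encodes one low-degree test evaluating the assignment at $\{\overline x_{\vec\alpha+i\vec t_1,\vec\beta+i\vec t_2}\}_{0\le i\le 3}$, and by construction the clique selects a test vertex that automatically accepts. Since the clique covers at least a $(1-4\varepsilon)$ fraction of type-1 groups, the low-degree test rejects $\pi(\alpha,\beta):=\overline x_{\alpha,\beta}$ with probability at most $4\varepsilon$. If $\pi$ were not $9\varepsilon$-close to the class $\mathcal F$ of vector-valued degree-2 polynomials, then Lemma \ref{LDTVector} would give a rejection probability at least $\min(9\varepsilon/2,\,c/4)$, which under the hypothesis $\varepsilon<c/16$ strictly exceeds $4\varepsilon$. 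This contradiction yields a polynomial $p\in\mathcal F$ with $\Pr_{\alpha,\beta}[\pi(\alpha,\beta)\ne p(\alpha,\beta)]\le 9\varepsilon$.

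Next I would promote the approximate linearity of $\pi$ to exact linearity of $p$. For the $\alpha$-linearity tests, the clique passes at least a $(1-8\varepsilon)$ fraction of type-2 groups, so $\pi(\alpha+\alpha',\beta)=\pi(\alpha,\beta)+\pi(\alpha',\beta)$ for at least $1-8\varepsilon$ fraction of triples $(\alpha,\alpha',\beta)\in\mathbb F^{3k}$. A union bound over the three evaluations shows that $\pi$ agrees with $p$ at all three points for at least a $1-27\varepsilon$ fraction of triples. Combining these, the polynomial
\[
q(\alpha,\alpha',\beta)\ :=\ p(\alpha+\alpha',\beta)-p(\alpha,\beta)-p(\alpha',\beta)
\]
vanishes on at least a $1-35\varepsilon$ fraction of $\mathbb F^{3k}$. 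Since $q$ has total degree $\le 2$ and $|\mathbb F|=5$, the Schwartz–Zippel lemma (assuming $\varepsilon$ is small enough that $35\varepsilon<1-2/5$, which we may do by shrinking the constant $c/16$ if necessary) forces $q\equiv 0$; the identical argument on the $\beta$-linearity tests gives $p(\alpha,\beta+\beta')\equiv p(\alpha,\beta)+p(\alpha,\beta')$.

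Finally, a total-degree-$2$ polynomial that is additively linear in $\alpha$ for every fixed $\beta$ has no monomial that is constant in $\alpha$ or quadratic in $\alpha$; analogously for $\beta$. The only monomials that survive both restrictions are the bilinear ones $\alpha_i\beta_j$ for $i,j\in[k]$, so
\[
p(\alpha,\beta)\ =\ \sum_{i=1}^k\sum_{j=1}^k \alpha_i\beta_j\,C_{i,j}
\]
for some vectors $C_{i,j}\in\mathbb F^\ell$, proving the lemma. I expect the only real nuisance to be the bookkeeping of error parameters: one has to choose the LDT closeness parameter equal to the claimed $9\varepsilon$ (so that the union bound in the linearity step is dominated by a constant multiple of $\varepsilon$), and simultaneously ensure that the resulting fraction of satisfied linearity identities exceeds the $2/|\mathbb F|$ threshold needed to invoke Schwartz–Zippel over $\mathbb F_5$.
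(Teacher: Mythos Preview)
Your proposal is correct and follows essentially the same approach as the paper: first invoke the low-degree test (Lemma~\ref{LDTVector}) to get $9\varepsilon$-closeness to a generic degree-$2$ polynomial, then use the passed linearity tests together with a union bound and Schwartz--Zippel over $\mathbb F_5$ to kill all non-bilinear coefficients. The only cosmetic difference is that the paper argues coefficient-by-coefficient (showing each $A_{i,j}$, $B_{i,j}$, $D_i$, $E_i$, $F$ must vanish) while you deduce the full additivity identity $q\equiv 0$ in one stroke and then read off the monomial structure---this is the same argument packaged slightly more compactly.
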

	\begin{proof}
		Plugging $\delta=9\varepsilon$ into Theorem \ref{LDTVector}, if $\pi(\alpha,\beta)$ is
		not $\delta$-close to any degree-2 polynomial, at least $\min(\delta/2,cd^{-2})>4\varepsilon$ fraction of the degree-2 polynomial tests will not be passed. However, when there is a clique of size $\ge (1-\varepsilon)k'$, only $\le 4\varepsilon$ fraction of degree-2 polynomial tests may fail. Therefore, $\pi$ must be $9\varepsilon$-close to a function of the form
		\begin{equation}
		\sum_{i=1}^k\sum_{j=1}^k\alpha_i\alpha_jA_{i,j}+\sum_{i=1}^k\sum_{j=1}^k\beta_i\beta_jB_{i,j}+\sum_{i=1}^k\sum_{j=1}^k\alpha_i\beta_jC_{i,j}+\sum_{i=1}^kD_i \alpha_i+\sum_{i=1}^kE_i\beta_i+F\label{quadeq}
		\end{equation}
		where each coefficient is in $\mathbb F^l$. We only need to prove that if they also pass most of the linearity tests on $\alpha$ and on $\beta$, all coefficients except $C_{i,j}$ must be zeros.
		
		Suppose $A_{i,j} \ne 0$ for some $i,j \in [k]$. Regarding $x_{\alpha+\alpha',\beta}-x_{\alpha,\beta}-x_{\alpha',\beta}$ as a function on $3k$ variables $\alpha_1\ldots \alpha_k,\alpha'_1\ldots \alpha'_k,\beta_1 \ldots \beta_k$ and expand it by \eqref{quadeq}. There is a term $A_{i,j}(\alpha_{i}\alpha_{j}'+\alpha_i'\alpha_j)$ which can never be canceled. The function $x_{\alpha+\alpha',\beta}-x_{\alpha,\beta}-x_{\alpha',\beta}$ is not a zero function and by Schwartz-Zippel Lemma, only $\frac{2}{|\mathbb F|}$ fraction of $\alpha,\alpha',\beta$ can make it equal to zero. However, by union bound, there are at least $1-8\varepsilon-3\cdot(9\varepsilon)>\frac{2}{|\mathbb F|}$ fraction of $(\alpha,\alpha',\beta)$  such that $\overline x_{\alpha,\beta},\overline x_{\alpha',\beta},\overline x_{\alpha+\alpha',\beta}$ are all specified value according to the clique, consistent with equation \eqref{quadeq}, and satisfying $\overline x_{\alpha,\beta}+\overline x_{\alpha',\beta}=\overline x_{\alpha+\alpha',\beta}$, a contradiction. Therefore, $\forall i,j \in [k], A_{i,j}=0$.
		
		Similar arguments can be applied for the other coefficients and are omitted here. The only term remaining is $\sum_{i=1}^k\sum_{j=1}^k\alpha_i\beta_jC_{i,j}$ as desired.
	\end{proof}
	
	We call a variable $x_{\alpha,\beta}$ \textit{good} if the clique consists of a variable vertex of it and it satisfies $\overline x_{\alpha,\beta}=\sum_{i=1}^k\sum_{j=1}^k\alpha_i\beta_jC_{i,j}$. Let $\varepsilon'=2\varepsilon+9\varepsilon$, from the above we know at least $(1-\varepsilon')$ fraction of variables are good. Recall that from the construction of our graph and the property of clique, all arity-2 constraints between good variables are satisfied.
	
	We call an $\alpha \in \mathbb F^k$ \textit{excellent} if $\Pr_{\beta\in_R\mathbb F^k}[x_{\alpha,\beta} \text{ is good}]\ge \frac{2}{3}$. By Markov's Inequality, at least $1-3\varepsilon'$ fraction of $\alpha$'s are excellent.

	\begin{lemma}\label{B2}
		For each excellent $\alpha$ and for each $u \in [k]$, $\sum_{i=1}^k \alpha_i C_{i,u}=f(\alpha,\vec v)$ for some unique $\vec v \in V_u$.
	\end{lemma}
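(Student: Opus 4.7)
The idea is to combine the global quadratic form established in Lemma \ref{B1} with the local arity-$2$ type-$3$ constraints, by locating a single $\beta\in\mathbb F^k$ for which both $x_{\alpha,\beta}$ and $x_{\alpha,\beta+e_u}$ are good. Fix an excellent $\alpha\in\mathbb F^k$ and an index $u\in[k]$. By the definition of ``excellent,'' $\Pr_\beta[x_{\alpha,\beta}\text{ good}]\ge 2/3$, and since $\beta+e_u$ is also uniformly distributed when $\beta$ is, a union bound gives
\[\Pr_\beta\bigl[x_{\alpha,\beta}\text{ and }x_{\alpha,\beta+e_u}\text{ are both good}\bigr]\ge 1-2\cdot\tfrac{1}{3}=\tfrac{1}{3}>0,\]
so at least one such $\beta$ exists.

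For this $\beta$, both variable vertices belong to the clique, so there is an edge between them. By the edge rule for pairs of type-$3$ vertices, this edge forces the type-$3$ arity-$2$ test indexed by $u$, namely
\[\overline{x}_{\alpha,\beta+e_u}-\overline{x}_{\alpha,\beta}=f(\alpha,\vec v)\quad\text{for some }\vec v\in V_u.\]
On the other hand, goodness provides the explicit quadratic form $\overline{x}_{\alpha,\gamma}=\sum_{i,j}\alpha_i\gamma_j C_{i,j}$ for $\gamma\in\{\beta,\beta+e_u\}$, and direct subtraction telescopes to $\overline{x}_{\alpha,\beta+e_u}-\overline{x}_{\alpha,\beta}=\sum_{i=1}^k\alpha_i C_{i,u}$. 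Equating the two expressions yields the desired identity $\sum_{i=1}^k\alpha_i C_{i,u}=f(\alpha,\vec v)$.

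Finally, uniqueness of $\vec v$ follows immediately from Property~2 of Lemma~\ref{A0} applied to the (nonzero) $\alpha$: the map $\vec v\mapsto f(\alpha,\vec v)$ is injective on $V_u$. The plan carries no serious obstacle; the only care required is that the constants chosen in Lemma~\ref{B1} (in particular $\varepsilon'=11\varepsilon$) are small enough for the union bound in the first paragraph to remain strict, and that the uniqueness step be understood as applying to nonzero $\alpha$ (if $\alpha=\vec 0$ then both sides vanish identically and uniqueness is vacuous and will not be needed downstream).
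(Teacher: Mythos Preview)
Your proof is correct and follows essentially the same line as the paper's: locate a $\beta$ with both $x_{\alpha,\beta}$ and $x_{\alpha,\beta+e_u}$ good, read off $\sum_i\alpha_iC_{i,u}$ from the quadratic form, match it against the type-3 edge constraint to get $f(\alpha,\vec v)$, and invoke Lemma~\ref{A0}(2) for uniqueness. The only difference is in how you find such a $\beta$: the paper partitions the pairs $(\beta,\beta+e_u)$ into $5$-cycles and uses pigeonhole (at most $2/5$ bad globally $\Rightarrow$ some cycle has $\le 2$ bad $\Rightarrow$ two adjacent good vertices), whereas your direct union bound $\Pr_\beta[\text{both good}]\ge 1-2\cdot\tfrac13>0$ is shorter and already suffices at the $2/3$ threshold. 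Your remark that the uniqueness clause should be read for nonzero $\alpha$ is a fair caveat that the paper leaves implicit.
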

	\begin{proof}
	
		For any fixed $u \in [k]$ and $\alpha$, the set of edges between the vertex of $x_{\alpha,\beta}$ and the vertex of $x_{\alpha,\beta+e_u}$ for all $\beta$ can be partitioned into disjoint  length-5 cycles $\{x_{\alpha,\beta},x_{\alpha,\beta+e_u},\ldots,x_{\alpha,\beta+4e_u}\}$ since the characteristic of $\mathbb F$ is 5. Observe that if at most two variables in a 5-cycle are not good, there still exist two adjacent vertices that are good.
		
		For an excellent $\alpha$, at most $\frac{1}{3}\le\frac{2}{5}$ fraction of variables $x_{\alpha,\beta}$ are not good by definition, so there must exists a $\beta\in\mathbb{F}^k$ such that  $x_{\alpha,\beta}$ and $x_{\alpha,\beta+e_u}$ are both good variables. According to the definition of good variables, we have \[\overline x_{\alpha,\beta+e_u}-\overline x_{\alpha,\beta} = \sum_{i=1}^k \alpha_iC_{i,u}\] and from the third type of constraints between them we can infer that \[\overline x_{\alpha,\beta+e_u}-\overline x_{\alpha,\beta}=f(\alpha,\vec v)\] for some $\vec v \in V_u$.
		
		Additionally, since for $\vec v \in V_u$, $f(\alpha,\vec v)$ are all different (the second property in Lemma \ref{A0}), for an excellent $\alpha$ and for all $u \in [k]$,  we can deduce $\sum_{i=1}^k \alpha_i C_{i,u}=f(\alpha,\vec v)$ for some unique $\vec v \in V_u$. 
	\end{proof}
	
	In the following when $u\in [k]$ is fixed and omitted, we use $\vec v_{\alpha}$ to denote the unique vector in $V_u$ specified by an excellent $\alpha$. For an $\alpha$ which is not excellent, we also assign a unique vector $\vec v_{\alpha} \in V_u$ to it arbitrarily so that $v_{\alpha}$ is defined for all $\alpha \in \mathbb F^k$.
	
	\begin{lemma}\label{B3}
		If there is a clique of size $\ge (1-\varepsilon)k'$, then for each $u \in [k]$, \[C_{i,u}=(A_1[i]\vec v,\ldots, A_{\ell}[i]\vec v)\] for some unique $\vec v \in V_u$, where $A_w[i]$ indicates the $i$-th row of the $w$-th matrix.
	\end{lemma}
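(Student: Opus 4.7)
The plan is to use Lemma \ref{B2} to write $\sum_{i=1}^k \alpha_i C_{i,u} = f(\alpha, \vec v_\alpha)$ for each excellent $\alpha$, and then argue that the vector $\vec v_\alpha \in V_u$ can be taken independent of $\alpha$ (up to the trivial freedom at $\alpha = 0$). Once a common $\vec v^* \in V_u$ is obtained, the resulting linear system in the unknowns $C_{i,u}$ will directly yield $C_{i,u} = (A_1[i]\vec v^*, \ldots, A_\ell[i]\vec v^*)$.

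First I would establish a quasi-homomorphism: whenever $\alpha, \alpha', \alpha + \alpha' \in \mathbb F^k$ are all excellent and $\alpha, \alpha'$ are nonzero, $\vec v_\alpha = \vec v_{\alpha'}$. By bilinearity of $f$ and Lemma \ref{B2},
\[
f(\alpha, \vec v_\alpha) + f(\alpha', \vec v_{\alpha'}) = \sum_{i=1}^k (\alpha+\alpha')_i C_{i,u} = f(\alpha+\alpha', \vec v_{\alpha+\alpha'}).
\]
Property 3 of Lemma \ref{A0} forbids $\vec v_\alpha, \vec v_{\alpha'}, \vec v_{\alpha+\alpha'}$ from being pairwise distinct, so at least two must coincide; in either of the two non-trivial cases ($\vec v_\alpha = \vec v_{\alpha+\alpha'}$ or $\vec v_{\alpha'} = \vec v_{\alpha+\alpha'}$), bilinearity of $f$ together with property 2 of Lemma \ref{A0} applied to the remaining nonzero first argument collapses the equation to $\vec v_\alpha = \vec v_{\alpha'}$. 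This three-case analysis is the principal obstacle, and it is exactly where properties 2 and 3 of Lemma \ref{A0} were engineered to pay off.

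Next I would globalize the pairwise statement by a random shift. Given nonzero excellent $\alpha, \alpha'$, pick $\gamma \in \mathbb F^k$ uniformly at random; each of $\gamma, \alpha + \gamma, \alpha' + \gamma$ fails to be nonzero and excellent with probability at most $3\varepsilon' + |\mathbb F|^{-k}$, so a union bound produces a good $\gamma$ provided $\varepsilon'$ is small enough. Applying the quasi-homomorphism to $(\alpha, \gamma)$ and then to $(\alpha', \gamma)$ forces $\vec v_\alpha = \vec v_\gamma = \vec v_{\alpha'}$, so a single $\vec v^* \in V_u$ works for every nonzero excellent $\alpha$.

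Finally, since $|\mathbb F| = 5$ and at least a $1 - 3\varepsilon'$ fraction of $\mathbb F^k$ is excellent, for small enough $\varepsilon'$ the nonzero excellent $\alpha$'s cover more than a $4/5$ fraction of $\mathbb F^k$ and hence cannot lie in any proper subspace; they span $\mathbb F^k$. Substituting $\vec v_\alpha = \vec v^*$ into $\sum_{i=1}^k \alpha_i C_{i,u} = \sum_{i=1}^k \alpha_i (A_1[i]\vec v^*, \ldots, A_\ell[i]\vec v^*)$ over a spanning set of $\alpha$'s pins down each $C_{i,u}$ coordinate-wise. Uniqueness of $\vec v^*$ is immediate from property 2 of Lemma \ref{A0}: a second $\vec v' \neq \vec v^*$ in $V_u$ would yield $f(\alpha, \vec v^*) = f(\alpha, \vec v')$ for some nonzero $\alpha$, contradicting that lemma.
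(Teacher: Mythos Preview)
Your argument is correct, and it follows the same overall skeleton as the paper: use Lemma~\ref{B2} to get $\sum_i \alpha_i C_{i,u} = f(\alpha,\vec v_\alpha)$ for excellent $\alpha$, then exploit the additive identity $f(\alpha,\vec v_\alpha)+f(\alpha',\vec v_{\alpha'})=f(\alpha+\alpha',\vec v_{\alpha+\alpha'})$ together with the properties of Lemma~\ref{A0} to pin down a common $\vec v$, and finally read off $C_{i,u}$.

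The execution, however, differs from the paper in a meaningful way. You push further in the first step: by doing the full three-case analysis (invoking property~2 of Lemma~\ref{A0} in addition to property~3) you prove that \emph{every} pair of nonzero excellent $\alpha,\alpha'$ with $\alpha+\alpha'$ excellent satisfies $\vec v_\alpha=\vec v_{\alpha'}$, and then a random-shift argument upgrades this to genuine unanimity among all nonzero excellent $\alpha$. The paper is lazier here: it only argues (by contradiction, using property~3 alone) that $\Pr_{\alpha,\alpha'}[\vec v_\alpha=\vec v_{\alpha'}]\ge 3/10$, hence some fixed $\vec v$ is chosen by more than a $3/10-3\varepsilon'>1/|\mathbb F|$ fraction of excellent $\alpha$. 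This weaker plurality is then enough because the final step in the paper is Schwartz--Zippel (the linear form $\alpha\mapsto\sum_i \alpha_i\bigl(C_{i,u}-(A_1[i]\vec v,\ldots,A_\ell[i]\vec v)\bigr)$ vanishes on more than $1/|\mathbb F|$ of the space, hence identically), whereas your final step is a straight linear-algebra spanning argument, which requires the stronger unanimity you established. Both routes are clean; yours is more elementary in the endgame but pays for it with the extra casework up front, while the paper's is shorter overall by deferring to Schwartz--Zippel.
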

	\begin{proof}
		Fix any $u \in [k]$, we first argue that at least $\ge \frac{3}{10}$ fraction of $\alpha$ specify the same $\vec v \in V_u$. It suffices to prove \[\Pr_{\alpha,\alpha'\in_R \mathbb F^k}[\vec v_{\alpha}=\vec v_{\alpha'}]\ge \frac{3}{10}.\]
		
		If this is not true, by union bound $3\cdot\frac{3}{10}+3(3\varepsilon')<1$, there must exist $\alpha,\alpha'$ such that the following two conditions hold:
		\begin{enumerate}
			\item $\alpha,\alpha',\alpha+\alpha'$ are all excellent.
			\item $\vec v_{\alpha},\vec v_{\alpha'},\vec v_{\alpha+\alpha'}$ are all different.
		\end{enumerate}
	
		Now that they are all excellent, we have
		\begin{align*}
		    f(\alpha+\alpha',\vec v_{\alpha+\alpha'}) & =\sum_{i=1}^k (\alpha_i+\alpha'_i)C_{i,u}\\ 
		& = f(\alpha,\vec v_{\alpha})+f(\alpha',\vec v_{\alpha'}),
		\end{align*}
		contradicting with the third property in Lemma \ref{A0}.
		
		Therefore, at least $\frac{3}{10}-3\varepsilon'>\frac{1}{|\mathbb F|}$ fraction of $\alpha$ are all excellent and specify the same $\vec v \in V_u$. Now suppose $C_{i,u} \ne (A_1[i]\vec v,\ldots, A_{\ell}[i]\vec v)$, then there are at most $\frac{1}{|\mathbb F|}$ fraction of $\alpha \in \mathbb F^k$ making $\sum_{i=1}^k \alpha_i C_{i,u}=f(\alpha,\vec v)$ by Schwartz-Zippel Lemma. However, we have $>\frac{1}{|\mathbb F|}$ fraction of such $\alpha$, a contradiction.
	\end{proof}
	
	\begin{lemma}\label{B4}
		If there is a clique of size $\ge (1-\varepsilon)k'$, then there exists $\vec v_1 \in V_1,...,\vec v_k \in V_k$ such that $\sum_{i=1}^k \vec v_i=\vec 0$.
	\end{lemma}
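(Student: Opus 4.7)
The plan is to exploit the type-4 tests $x_{\alpha,\beta+\mathbf 1} - x_{\alpha,\beta} = \vec 0$, which have not yet been used in Lemmas~\ref{B1}--\ref{B3}, to derive the single algebraic identity $\sum_{j=1}^k C_{i,j} = \vec 0$ for every $i \in [k]$, and then translate this into $\sum_{u=1}^k \vec v_u = \vec 0$ via the structure given by Lemma~\ref{B3}.

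First I would fix, for each $u \in [k]$, the unique vector $\vec v_u \in V_u$ provided by Lemma~\ref{B3}, so that $C_{i,u} = (A_1[i]\vec v_u, \ldots, A_\ell[i] \vec v_u)$ for every $i$. Granted the identity $\sum_{j} C_{i,j} = \vec 0$, reading off coordinate $w$ yields $A_w[i] \sum_{u} \vec v_u = 0$ for all $w \in [\ell]$ and $i \in [k]$. Letting $i$ range over $[k]$, this is exactly $A_w \bigl( \sum_{u} \vec v_u \bigr) = \vec 0$ for every $w \in [\ell]$, and property~1 of Lemma~\ref{A0} immediately forces $\sum_{u} \vec v_u = \vec 0$, which is the desired conclusion.

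The crux is therefore to prove $\sum_j C_{i,j} = \vec 0$ for each $i$. Suppose both $x_{\alpha,\beta}$ and $x_{\alpha,\beta+\mathbf 1}$ are good. Their variable vertices both lie in the clique, so rule 3 of the edge construction forces them to satisfy the corresponding type-4 test, i.e.\ $\overline x_{\alpha,\beta+\mathbf 1} - \overline x_{\alpha,\beta} = \vec 0$. Plugging in the good form $\overline x_{\alpha,\beta} = \sum_{i,j} \alpha_i \beta_j C_{i,j}$ rewrites this as $\sum_{i=1}^k \alpha_i \sum_{j=1}^k C_{i,j} = \vec 0$. Since at least $1 - \varepsilon'$ fraction of variables are good (with $\varepsilon' = 11\varepsilon$), a union bound shows that at least $1 - 2\varepsilon'$ fraction of pairs $(\alpha,\beta) \in \mathbb F^{2k}$ have both variables good; because the identity depends only on $\alpha$, at least $1 - 2\varepsilon'$ fraction of $\alpha \in \mathbb F^k$ satisfy $\sum_i \alpha_i \sum_j C_{i,j} = \vec 0$. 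Provided $\varepsilon$ is a small enough absolute constant so that $1 - 2\varepsilon' > 1/|\mathbb F| = 1/5$, Schwartz--Zippel applied coordinate by coordinate forces the linear form $\alpha \mapsto \sum_i \alpha_i \sum_j C_{i,j}$ to be identically zero, i.e.\ $\sum_j C_{i,j} = \vec 0$ for every $i$.

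The main (and essentially only) obstacle is bookkeeping the constants: $\varepsilon$ must be chosen small enough to make $1 - 2\varepsilon' > 1/|\mathbb F|$ and to simultaneously satisfy the hypotheses of Lemmas~\ref{B1}--\ref{B3}. This is handled by fixing a single sufficiently small absolute constant $\varepsilon$ at the start of the soundness analysis, and requires no new ideas beyond those already set up in the earlier lemmas.
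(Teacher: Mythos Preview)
Your argument is correct and follows the same overall strategy as the paper: use the type-4 constraints between a good pair $x_{\alpha,\beta}$, $x_{\alpha,\beta+\mathbf 1}$ to deduce $\sum_i \alpha_i \sum_j C_{i,j}=\vec 0$ (equivalently $f(\alpha,\sum_u \vec v_u)=\vec 0$) for a large fraction of $\alpha$, then apply Schwartz--Zippel together with property~1 of Lemma~\ref{A0} to conclude $\sum_u \vec v_u=\vec 0$. The only methodological difference is in how you locate a good adjacent pair: the paper reuses the ``excellent $\alpha$'' notion and the $5$-cycle pigeonhole from Lemma~\ref{B2} (for each excellent $\alpha$, at most $2/5$ of the $\beta$'s are bad, so some cycle $\{x_{\alpha,\beta+j\mathbf 1}\}_{j=0}^4$ contains two adjacent good vertices), whereas you bypass this with a direct union bound over pairs $(\alpha,\beta)$ and then project to $\alpha$. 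Your route is slightly more elementary and gives the same quantitative conclusion; the paper's version has the aesthetic advantage of parallelism with Lemma~\ref{B2}, but nothing more.
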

	
	\begin{proof}
		From Lemma \ref{B3} we know that for each $u \in [k]$, $C_{i,u}=(A_1[i]\vec v,\ldots, A_{\ell}[i]\vec v)$ for some unique $\vec v \in V_u$. Thus $\overline x_{\alpha,\beta}$ indeed equals to $f(\alpha,\sum_{i=1}^k\beta_i\vec v_i)$ for every good variable $x_{\alpha,\beta}$. 
		
		The remaining proof is very similar to the proof of Lemma \ref{B2}. Edges between the vertex of $x_{\alpha,\beta}$ and the vertex of $x_{\alpha,\beta+\mathbf{1}}$ for all $\beta\in\mathbb{F}^k$ can be divided into disjoint length-5 cycles $\{x_{\alpha,\beta},x_{\alpha,\beta+\mathbf{1}},\ldots,x_{\alpha,\beta+4\cdot\mathbf{1}}\}$ since the characteristic of $\mathbb F$ is 5.
		
		For an excellent $\alpha$, at most $\frac{1}{3}\le\frac{2}{5}$ fraction of variables $x_{\alpha,\beta}$ are not good, so there must be two variables $x_{\alpha,\beta}$ and $x_{\alpha,\beta+\mathbf{1}}$ which are both good. According to the definition of good variables, we have \[\overline x_{\alpha,\beta+s}-\overline x_{\alpha,\beta}=f\left(\alpha,\sum_{i=1}^k \vec v_i\right)\] and from the fourth type of constraints between them we can infer that \[\overline x_{\alpha,\beta+\mathbf{1}} - \overline x_{\alpha,\beta} = \vec 0.\]
		
		Therefore, $f(\alpha,\sum_{i=1}^k\vec v_i)=\vec 0$ for every excellent $\alpha$.
		
		Suppose $\sum_{i=1}^k \vec v_i \ne \vec 0$, then there exists $i \in [\ell]$ such that $A_i(\sum_{i=1}^k \vec v_i)\ne \vec 0$ by the first property in Lemma \ref{A0}. There are at most $\frac{1}{|\mathbb F|}$ fraction of $\alpha$ such that $f(\alpha,\sum_{i=1}^k\vec v_i)=\vec 0$ by Schwartz-Zippel Lemma, but we have $\ge 1-3\varepsilon'$ fraction of excellent $\alpha$, a contradiction.
	\end{proof}
	
	\subsection{Putting Things Together}
	
	Combing Theorem \ref{3SATtokVS} with Theorem \ref{kVStokClique}, we obtain an improved lower bound for constant gap {\sc $k$-Clique} under {\sf ETH} as follows.
	
	\begin{restated_thm}{thm:main}
		 Assuming {\sf ETH}, for any constant $c>1$ and any computable function $f$, no algorithm can find a clique of size $k/c$ in the {\sc $k$-Clique} problem in $f(k)n^{o(\log k)}$ time.
	\end{restated_thm}
	\begin{proof}
	    Without loss of generality assume $f$ is non-decreasing and unbounded. Given a {\sc 3SAT} formula $\varphi$ with $n$ variables, each appearing in at most 3 clauses, we first run the reduction in Theorem \ref{3SATtokVS} to produce a {\sc $k$-VectorSum} instance of size $2^{O(n/k)}$, then run the reduction in Theorem \ref{kVStokClique} to produce a constant gap {\sc $c^k$-Clique} instance of size at most $c^k 2^{O(n/k)}$. Using the graph product method, we can amplify the gap to any constant, while keeping the parameter $k' = c^{O(k)}$ and instance size $n'\le c^{O(k)}2^{O(n/k)}$. Therefore, an $f(k')n^{o(\log k')}$ time algorithm for constant gap {\sc $k'$-Clique} would lead to an algorithm for {\sc 3SAT} in  \[f(k')(n')^{o(\log k')}\le f(c^{O(k)}) (c^{O(k)} 2^{O(n/k)})^{o(k)} \le 2^{o(n)}\] time, contradicting {\sf ETH}. The last inequality holds because $n$ can be sufficiently large compared to $k$.
	\end{proof}
	
Below we present two remarks about possible extensions of our results on {\sf ETH} lower bounds of gap {\sc $k$-Clique}.
	
\begin{remark}[On higher degree Reed-Muller Codes]
    It is natural to extend our idea to obtain a reduction from {\sc $k$-VectorSum} to a CSP problem with $<2^{o(k)}$ variables using Reed–Muller code with larger degree polynomials. However, the reduction from CSP to {\sc $k$-Clique} has such an important property: when there is a clique of size $\varepsilon k$ for some constant $\varepsilon$, the following two conditions hold:
\begin{enumerate}
    \item A constant fraction of arity-$d$ constraints $(d>2)$ are satisfied.
    \item All arity-$2$ constraints between a constant fraction of variables are satisfied.
\end{enumerate}
The second condition holds because an arity-2 constraint between two variables can be directly transformed into an edge between two vertices. If there is a large clique, it means all arity-2 constraints between the involved variables are simultaneously satisfied. However, if we use larger degree polynomials, the arity of constraints has to be larger, too. It cannot directly fit into the framework of {\sc $k$-Clique}. If this barrier can be broken, it may be possible to obtain reductions with an even smaller parameter blow-up using larger degree polynomials.
\end{remark}

\begin{remark}[On locally decodable codes]
Our reduction implicitly depends on the property of $2$-query locally decodable code, that we could decode $f(\alpha,\vec{v_i})$ for some fixed $\alpha$ by querying only $2$ positions. As pointed out in \cite{GKST06}, $2$-query locally decodable code has at least an exponential blow up. Hence our method is optimal in this sense. We could also consider how to remove this dependence.
\end{remark}
\section{$k^{o(1)}$-Ratio FPT Inapproximability  of {\sc $k$-Clique} under {\sf ETH}}
\label{sec:amplification}

    In this section, we show how to use expander graphs to amplify the gap efficiently, and how it leads to an improved inapproximability ratio of {\sc $k$-Clique} in FPT time under {\sf ETH}. The idea comes from the classical technique used to amplify gap in the non-parameterized version of {\sc Clique} problem, which was proposed by Alon et al. \cite{AFWZ95}.

	\begin{theorem}\label{kCliqueGapAmp}
        For some constants $d \in \mathbb N, 0<\lambda<1$ and for any $t \in \mathbb N$, there is an algorithm which runs in $O(k^2d^{2t}|V|^{2t})$ time, on input an instance $\Gamma=(V,E)$ of {\sc $k$-Clique} problem, outputs an instance $\Gamma'$ of {\sc $k'$-Clique} problem such that
        \begin{itemize}
            \item $k'=kd^{t-1}$.
            \item (Completeness.) If $\Gamma$ has a $k$-clique, then $\Gamma'$ has a $k'$-clique.
            \item (Soundness.) If $\Gamma$ has no $\varepsilon k$-clique, then $\Gamma'$ has no clique of size $k'((1 - \lambda)\sqrt{\varepsilon} + \lambda)^{t-1}$.
        \end{itemize}
    \end{theorem}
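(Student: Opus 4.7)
My plan is to implement the classical expander product, using the expander $H$ on the index set $[k]$ to correlate $t$ copies of $\Gamma$. First I would invoke Lemma \ref{ExpanderConstruction} to build, for some absolute constants $d\in\mathbb N$ and $\lambda<1$, a $(k,d,\lambda)$-expander $H$ on $[k]$ in $k^{O(1)}$ time. The groups of $\Gamma'$ are then indexed by the $k'=kd^{t-1}$ length-$t$ walks in $H$. For a walk $W=(i_1,\ldots,i_t)$, the corresponding group contains those tuples $(v_1,\ldots,v_t)$ with $v_j\in V_{i_j}$ that are internally valid, i.e.\ $v_a=v_b$ whenever $i_a=i_b$ and $\{v_a,v_b\}\in E$ whenever $i_a\ne i_b$; equivalently, a valid tuple represents a clique of $\Gamma$ on the set of indices $W$ visits. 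Two tuples $(v_1,\ldots,v_t)$ and $(u_1,\ldots,u_t)$ from walks $(i_1,\ldots,i_t)$ and $(j_1,\ldots,j_t)$ are adjacent in $\Gamma'$ iff the analogous condition holds across them: $v_a=u_b$ whenever $i_a=j_b$ and $\{v_a,u_b\}\in E$ whenever $i_a\ne j_b$, for all $(a,b)\in[t]\times[t]$. This rule automatically makes each group an independent set.

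Completeness is immediate: from a $k$-clique $\{w_1,\ldots,w_k\}$ with $w_i\in V_i$, assigning the tuple $(w_{i_1},\ldots,w_{i_t})$ to each walk produces a $k'$-clique in $\Gamma'$, since any pair from $\{w_1,\ldots,w_k\}$ is either equal or adjacent in $\Gamma$. The runtime bound $O(k^2 d^{2t}|V|^{2t})$ follows by enumerating, for every pair of walks (at most $(kd^{t-1})^2$) and every pair of tuples (at most $|V|^{2t}$), a constant-time check of the edge conditions.

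For soundness, suppose $\Gamma'$ contains a clique $C$ of size $s$. Since groups are independent sets, $C$ selects one tuple from each of $s$ distinct walks; let $\mathcal B\subseteq[k]$ be the set of $H$-vertices visited by at least one of these walks and $B\subseteq V$ the set of all vertices appearing in the $s$ tuples. By the internal and cross-walk consistency clauses, $B$ contains exactly one vertex per element of $\mathcal B$, so $|B|=|\mathcal B|$; the adjacency clauses force $B$ to be a clique in $\Gamma$, which by hypothesis must satisfy $|B|<\varepsilon k$. Applying Lemma \ref{ExpanderWalk} with this $\mathcal B$ bounds the number of length-$t$ walks in $H$ staying inside $\mathcal B$ by $kd^{t-1}((1-\lambda)\sqrt{\varepsilon}+\lambda)^{t-1}$, and since every tuple in $C$ lives on such a confined walk, this upper-bounds $s$. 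Rewriting $kd^{t-1}=k'$ gives the claimed soundness bound.

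The construction is a standard one, so no step is genuinely difficult; the main care point is stating the edge rule precisely enough that a clique in $\Gamma'$ automatically yields a set $B$ with $|B|=|\mathcal B|$ and clique structure in $\Gamma$. Once that is pinned down, the argument reduces cleanly to a single application of Lemma \ref{ExpanderWalk}, with no further estimation required.
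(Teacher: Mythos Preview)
Your proposal is correct and follows essentially the same expander-walk product as the paper. The only cosmetic difference is that you impose an explicit ``internal validity'' and equal-vertex-on-equal-index rule, whereas the paper simply asks that the combined multiset $\{v_1,\dots,v_t,u_1,\dots,u_t\}$ form a clique in $\Gamma$; since each $V_i$ is an independent set these formulations coincide, and the soundness step reduces in both cases to a single invocation of Lemma~\ref{ExpanderWalk}.
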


	\begin{proof}
	Let $H$ be an $(k, \lambda, d)$-expander constructed from Lemma \ref{ExpanderConstruction}. We construct $\Gamma'$ as follows. Each of the $k'=kd^{t-1}$ groups in $\Gamma'$ is associated with a unique path of length-$t$ random walk on $H$. We use $(c_1,\ldots,c_t)$ to name a group in $\Gamma'$, where each $c_i \in [k]$ indicates a group in $\Gamma$.
	
	A vertex in $\Gamma'$ is a length-$t$ sequence of vertices in $\Gamma$. Namely, there is a vertex $(u_1,\ldots,u_t)$ in the $(c_1,\ldots,c_t)$-th group in $\Gamma'$ if and only if each $u_i$ is belongs to group $c_i$ in $\Gamma$. Therefore, the total number of vertices is at most $kd^{t-1}|V|^t$ in $\Gamma'$.
	
	A vertex $(u_1, \ldots, u_t)$ in group $(c_1, \ldots, c_t)$ is linked to a vertex $(v_1, \ldots, v_t)$ in group $(d_1, \ldots, d_t)$ if and only if
	\begin{itemize}
		\item $(c_1, \ldots, c_t) \ne (d_1, \ldots, d_t)$,
		
		\item and the vertices $\{v_1, \ldots, v_t, u_1, \ldots, u_t\}$ form a clique in $\Gamma$.
	\end{itemize}
	
	The reduction runs in $O(k^2d^{2t}|V|^{2t})$ time by simply  enumerating every pair of vertices in $\Gamma'$ and checking if there is an edge between them.
	
	~
	
	\noindent\textbf{Completeness.}
	Let $\{v_1,\ldots, v_k\}$ be an $k$-clique in $\Gamma$. Then in group $(c_1, \ldots, c_t)$ in $\Gamma'$, we can pick the vertex $(v_{c_1}, \ldots, v_{c_t})$. It's easy to see those vertices form an $kd^{t-1}$-clique.
	
	~
	
	\noindent\textbf{Soundness.}
	For any clique $V$ in $\Gamma'$, let $U$ be the collection of vertices in $\Gamma$ which appear as part of the name of a vertex in $V$. Since $V$ is a clique in $\Gamma'$, it follows by construction that $U$ is also a clique in $\Gamma$ and thus $|U| \le \varepsilon k$. Recall that each $ (c_1,\ldots, c_t)$ represents a length-$t$ random walk on $H$, and all those $c_i$'s lie in a set of size $\le \varepsilon k$ (which corresponds to the collection of groups that vertices in $U$ belong to). By plugging $n=k, |\mathcal B|\le \varepsilon k$ into Lemma \ref{ExpanderWalk}, the number of different groups that vertices in $V$ belong to is bounded by $kd^{t-1}((1-\lambda)\sqrt{\varepsilon} + \lambda)^{t-1}$, and so is $|V|$.
	\end{proof}
	
	For any function $\delta(k)=o(1)$, by setting $t$ to be as large as some $o(\log k)$, we can make $\varepsilon'=((1-\lambda)\sqrt{\varepsilon} + \lambda)^{t-1}$ smaller than $k^{-\delta(k)}$ while keeping $k'=kd^{t-1} \le k^{O(1)}$. Thus, by combining Theorem \ref{thm:main} and Theorem \ref{kCliqueGapAmp}, we have the following corollary.
	\begin{restated_cor}{cor:main}
		Assuming {\sf ETH}, for any $g(k) = k^{o(1)}$, the {\sc $k$-Clique} problem has no $g(k)$-FPT-approximation algorithm.
	\end{restated_cor}
\section{From Constant Gap {\sc $k$-Clique} to {\sf PIH}}
\label{sec:pih}

In this section we will show that strong lower bound of constant gap \textsc{$k$-Clique} implies {\sf PIH}. For simplicity of notation, we additionally define problems {\sc Gap-clique($k$, $\ell$)} and {\sc Gap-biclique($k$, $\ell$)} ($k>\ell$), whose definitions are almost the same as {\sc $k$-Clique} and {\sc $k$-Biclique}, except that the soundness parameter is $\ell$. We have the following theorem:

\begin{restated_thm}{thm:gapcliqe2pih}
    If {\sc Gap-clique($k$, $\varepsilon k$)} does not admit $f(k) \cdot n^{O\left(\frac{k}{\log k}\right)}$-time algorithms for some $0 < \varepsilon < 1$, then {\sf PIH} is true.
\end{restated_thm}

The proof is relatively elementary and consists of three steps: first reduce constant gap {\sc $k$-Clique} to constant gap {\sc $k$-Biclique}, then use a \textit{disperser} to compress the soundness parameter to $\frac{\log k}{k}$, finally use the K\H ov\' ari-S\'os-Tur\'an Theorem to show that in the soundness case, the density of every $2k$-vertex bipartite subgraph is low.

\begin{lemma}\label{lem:clique2biclique}
    If {\sc Gap-clique($k$, $\varepsilon k$)} does not admit $f(k) \cdot n^{O\left(\frac{k}{\log k}\right)}$-time algorithms for some $0 < \varepsilon < 1$, then neither does {\sc Gap-biclique($k$, $\frac{1 + \varepsilon}{2}k$)}.
\end{lemma}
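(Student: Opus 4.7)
The plan is to carry out the standard reduction from colorful $k$-Clique to bipartite $k$-Biclique and to analyze the gap via an inclusion--exclusion argument. Given a Gap-clique($k$, $\varepsilon k$) instance $G=(V_1\dot\cup\cdots\dot\cup V_k, E)$, I would construct a bipartite graph $G'$ with left part $V_1\dot\cup\cdots\dot\cup V_k$ and right part $U_1\dot\cup\cdots\dot\cup U_k$, where each $U_i$ is a disjoint copy of $V_i$ and $\hat v\in U_i$ denotes the copy of $v\in V_i$. For $v\in V_i$ on the left and $\hat v'\in U_j$ on the right I include the edge $\{v,\hat v'\}$ if and only if either $i=j$ and $v=v'$, or $i\neq j$ and $\{v,v'\}\in E$. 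The reduction is polynomial and preserves the parameter $k$, so any $f(k)\cdot n^{O(k/\log k)}$-time algorithm for Gap-biclique($k$, $\frac{1+\varepsilon}{2}k$) on $G'$ immediately yields one of the same form for Gap-clique($k$, $\varepsilon k$) on $G$, which is the desired contrapositive.

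Completeness is immediate: a $k$-clique $\{v_1,\ldots,v_k\}$ with $v_i\in V_i$ yields a $K_{k,k}$ by choosing the left vertices $v_i$ and the right copies $\hat v_i$, since diagonal edges are present by the $v=v'$ rule and cross edges by the clique assumption.

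The core of the argument is soundness. Suppose $G'$ contains a $K_{\ell,\ell}$ with left side drawn from $\ell$ distinct groups indexed by $I\subseteq[k]$ and right side from $\ell$ distinct groups indexed by $J\subseteq[k]$; set $S=I\cap J$, so inclusion--exclusion gives $|S|\ge 2\ell-k$. Write $v_t\in V_t$ for the chosen left vertex when $t\in I$ and $\hat v'_t\in U_t$ for the chosen right vertex when $t\in J$. Two construction rules then do all the work: for each $t\in S$, the diagonal biclique edge between $v_t$ and $\hat v'_t$ forces $v_t=v'_t$; and for distinct $t_1,t_2\in S$, the cross biclique edge between $v_{t_1}$ and $\hat v'_{t_2}$ forces $\{v_{t_1},v_{t_2}\}\in E$. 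Therefore $\{v_t:t\in S\}$ is a clique of size at least $2\ell-k$ in $G$. Contrapositively, if $G$ has no $\varepsilon k$-clique then $2\ell-k<\varepsilon k$, i.e.\ $\ell<\frac{1+\varepsilon}{2}k$, delivering the required gap for $G'$.

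I do not anticipate a serious obstacle; the reduction is textbook and the analysis is a short inclusion--exclusion observation combined with two one-line consequences of the construction. The only subtle point worth verifying is that the $k$-Biclique notion used in the paper is colorful (one vertex per group on each side), but even without that convention the argument still goes through: two distinct vertices of some $V_i$ on the left cannot share any common right neighbor in $U_i$, since the diagonal edge rule requires equality, so restricting to distinct-group bicliques on each side is without loss of generality.
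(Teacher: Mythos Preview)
Your proposal is correct and follows essentially the same approach as the paper: the identical bipartite-double-cover construction with diagonal edges forcing equality, and the same inclusion--exclusion bound $|I\cap J|\ge 2\ell-k$ to extract a clique of that size in $G$. The paper phrases the soundness argument slightly differently (it directly takes $\ell=\frac{1+\varepsilon}{2}k$ and argues the intersection has size $\varepsilon k$), but the content is identical.
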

\begin{proof}
    We reduce a {\sc Gap-clique($k$, $\varepsilon k$)} instance $G=(V_1\dot\cup\ldots\dot\cup V_k,E)$ to a {\sc Gap-biclique($k$, $\frac{1 + \varepsilon}{2}k$)} instance  $G'=(U_1\dot\cup\ldots\dot\cup U_k, W_1\dot\cup\ldots\dot\cup W_k, E)$ as follows.
	
	The vertex sets in each side are just copies of $V$, i.e., $U_i=W_i=V_i,\forall i \in [k]$. Two vertices $u \in U_i, w \in W_j$ where $i\ne j$ are linked if and only if their corresponding vertices are linked in $G$, while two vertices $u \in U_i, w \in W_i$ are linked iff they correspond to the same vertex in $G$.
	
	The completeness case is obvious. In the soundness case, suppose we can pick $\frac{1 + \varepsilon}{2}k$ vertices from different parts of $U$ and $\frac{1 + \varepsilon}{2}k$ vertices from different parts of $W$ such that they form a biclique. Let the collection of picked vertices be $S$. Then there must be an index set $\mathcal I$ of size $\varepsilon k$ such that $\forall i \in \mathcal I$, $(S \cap U_i \ne \emptyset) \land (S \cap W_i \ne \emptyset)$. Moreover, $|S \cap U_i|=|S \cap W_i|=1$ by our construction of edges between $U_i$ and $W_i$. Then consider the set $\bigcup_{i \in \mathcal I} (S \cap U_i)$ which is of size at least $\varepsilon k$. The vertices in it must form a clique of size $|\mathcal I|$ in the original graph $G$.
\end{proof}

\begin{lemma}
\label{lem:logk}
    If {\sc Gap-biclique($k$, $\varepsilon k$)} does not admit $f(k) \cdot n^{O\left(\frac{k}{\log k}\right)}$-time algorithms for some constant $0 < \varepsilon < 1$, then for any constant $0<c<1$, no algorithm can solve {\sc Gap-biclique($k$, $c\log k$)} in $f(k) \cdot n^{O(1)}$ time.
\end{lemma}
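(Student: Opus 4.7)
The plan is to argue the contrapositive: assuming that {\sc Gap-biclique}$(k,c\log k)$ admits an $f(k)\cdot n^{O(1)}$-time algorithm for some computable $f$, I will exhibit an $f(k)\cdot n^{O(k/\log k)}$-time algorithm for {\sc Gap-biclique}$(k,\varepsilon k)$ by a disperser-based gap-amplification reduction, contradicting the hypothesis. The main object is a $(k,k,\ell,c\log k,1-\varepsilon)$-disperser $\mathcal I=\{I_1,\ldots,I_k\}\subseteq[k]$; Lemma \ref{lem:disperser} furnishes such a collection with $\ell=\lceil 3k/((1-\varepsilon)c\log k)\rceil=O(k/\log k)$, since the required condition $\ln k\le k/(c\log k)$ is satisfied for all sufficiently large $k$.

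Given an input $G=(U_1\dot\cup\cdots\dot\cup U_k,W_1\dot\cup\cdots\dot\cup W_k,E)$ of {\sc Gap-biclique}$(k,\varepsilon k)$, I build $G'$ with $k$ left groups and $k$ right groups, setting $U'_j=\{(u_i)_{i\in I_j}:u_i\in U_i\}$ and defining $W'_j$ analogously. Two tuples $(u_i)_{i\in I_j}\in U'_j$ and $(w_{i'})_{i'\in I_{j'}}\in W'_{j'}$ are joined by an edge exactly when $u_iw_{i'}\in E$ for every $i\in I_j$ and every $i'\in I_{j'}$. The instance $G'$ has at most $k\cdot n^{\ell}=n^{O(k/\log k)}$ vertices and is built in time polynomial in this size.

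Completeness is immediate: a $k$-biclique of $G$ witnessed by $u_i\in U_i$, $w_i\in W_i$ lifts to a $k$-biclique of $G'$ by taking $(u_i)_{i\in I_j}\in U'_j$ and $(w_i)_{i\in I_j}\in W'_j$ for each $j$. For soundness, suppose $G'$ contains a $K_{c\log k,c\log k}$ drawn from distinct left groups $U'_{j_1},\ldots,U'_{j_r}$ and distinct right groups $W'_{j'_1},\ldots,W'_{j'_r}$ with $r=c\log k$. Let $S_L=\bigcup_s I_{j_s}$ and $S_R=\bigcup_t I_{j'_t}$; the disperser property forces $|S_L|,|S_R|\ge\varepsilon k$. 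For each $i\in S_L$, pick any $s$ with $i\in I_{j_s}$ and let $u_i$ be the $i$-th coordinate of the chosen tuple in $U'_{j_s}$, and define $w_{i'}$ symmetrically for $i'\in S_R$. The edge condition in $G'$ then certifies that $u_iw_{i'}\in E$ for every pair $(i,i')\in S_L\times S_R$, producing a biclique in $G$ spanning at least $\varepsilon k$ distinct groups on each side, contradicting the soundness promise of the original instance.

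The main subtlety I expect is the overlap of the disperser sets: an index $i\in I_{j_s}\cap I_{j_t}$ may be realized by different vertices in the tuples selected from $U'_{j_s}$ and $U'_{j_t}$. This causes no trouble, because the biclique condition to be verified in $G$ only involves cross-side pairs, so an arbitrary choice of representative per index still yields a valid biclique; no same-side consistency is required. Chaining the reduction with the hypothesized algorithm yields an $f(k)\cdot(n')^{O(1)}=f(k)\cdot n^{O(k/\log k)}$-time algorithm for {\sc Gap-biclique}$(k,\varepsilon k)$, completing the contrapositive.
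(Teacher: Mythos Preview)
Your proposal is correct and follows essentially the same disperser-based reduction as the paper: the same $(k,k,\ell,c\log k,1-\varepsilon)$-disperser, the same tuple construction of $G'$, and the same completeness/soundness argument. You even anticipate and handle the index-overlap subtlety that the paper's proof leaves implicit; the only small point you omit is that the disperser can be found deterministically by brute-force enumeration in $f(k)$ time since it depends only on $k$, but this is routine.
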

\begin{proof}
	Given a {\sc Gap-biclique($k$, $\varepsilon k$)} instance $G=(U_1 \dot\cup\ldots\dot\cup U_k,W_1 \dot\cup\ldots\dot\cup W_k, E)$, let $\ell = \lceil \frac{3k}{\varepsilon c\log k}\rceil$ and let $\mathcal I=(I_1,\ldots,I_k)$ be a $(k,k,\ell,c\log k,1-\varepsilon)$-disperser. Since the size of $\mathcal{I}$ is independent of $n$, we can deterministically enumerate all possible $\mathcal{I}$ to find a valid one in $f(k)$ time. The existence of such a disperser is guaranteed by Lemma \ref{lem:disperser}. We construct a new {\sc Gap-biclique($k$, $c\log k$)} instance $G'=(U'_1 \dot\cup\ldots\dot\cup U'_k,W'_1 \dot\cup\ldots\dot\cup W'_k, E)$ as follows.
	
	The groups of vertices in $G'$ correspond to the combination of groups in $G$ according to the disperser. Specifically, for $1 \le i \le k$, let $I_i=\{i_1,\ldots, i_\ell\}$, then each vertex in $U'_i$ will correspond to a tuple of vertices $(u_{i_1},\ldots, u_{i_\ell})$ where $u_{i_j}$ comes from $U_{i_j}$ in $G$ for all $1 \le j \le \ell$. The construction of right vertices $W'_1 ,\ldots W'_k$ is similar. The size of the new instance is therefore at most $n^{\ell}= n^{O(\frac{k}{\log k})}$.
	
	An edge between a left vertex $(u_{i_1},\ldots,u_{i_\ell})$ and a right vertex $(w_{j_1},\ldots,w_{j_\ell})$ exists if and only if the vertices $\{u_{i_1},\ldots,u_{i_\ell}, w_{j_1},\ldots, w_{j_\ell}\}$ form a biclique $K_{\ell,\ell}$ in $G$.
	
	The completeness case is still obvious, and we focus on the soundness case. Prove by contradiction, if there exists $c\log k$ vertices from different groups of $U'$ and $c\log k$ vertices from different groups of $W'$ which form a biclique $K_{c\log k,c\log k}$, let $S$ be the collection of vertices which appear as part of one of the $2c\log k$ tuples. For $1 \le i \le k$, arbitrarily pick one vertex from each $S \cap U_i$, $S \cap W_i$ if not empty, then we claim that the resulting collection must be a biclique of size $\ge \varepsilon k$ on both sides. The promise of biclique is from our construction, while the size is guaranteed by properties of the disperser.
	
	Therefore, an $f(k)\cdot n^{O(1)}$ time algorithm for the {\sc Gap-biclique($k$, $c\log k$)} problem would lead to an $f(k) \cdot \left(n^{O\left(\frac{k}{\log k}\right)}\right)^{O(1)} = f(k) \cdot n^{O\left(\frac{k}{\log k}\right)}$ time algorithm for {\sc Gap-biclique($k$, $\varepsilon k$)} problem. This completes the proof.
\end{proof}

\begin{theorem}(K{\~o}v{\'a}ri-S{\'o}s-Tur{\'a}n, \cite{Kvri1954OnAP})
\label{thm:turan}
For any graph $G$ on $n$ vertices and any $a \ge 2$, if $G$ does not contain $K_{a,a}$ as a subgraph, then $G$ has at most $\frac{1}{2}(a-1)^{\frac{1}{a}}n^{2-\frac{1}{a}}+\frac{1}{2}(a-1)n$ edges.
\end{theorem}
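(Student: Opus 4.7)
The plan is the classical double-counting argument due to Kővári, Sós and Turán, applied to pairs $(v,S)$ where $v$ is a vertex of $G$ and $S$ is an $a$-element subset of the neighborhood $N(v)$. Let $N$ denote the total number of such pairs, and let $d(v)$ denote the degree of $v$.

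First I would establish an upper bound on $N$ by fixing the second coordinate. For any $a$-subset $S\subseteq V(G)$, if there existed $a$ distinct vertices $v_1,\ldots,v_a$ all satisfying $S\subseteq N(v_i)$, then these $a$ vertices would be disjoint from $S$ (since $G$ has no loops) and together with $S$ would constitute a $K_{a,a}$ subgraph, contradicting the hypothesis. Thus each $a$-subset is counted at most $a-1$ times, so $N\le (a-1)\binom{n}{a}$.

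Next I would obtain a lower bound by fixing the first coordinate, giving $N=\sum_{v\in V(G)}\binom{d(v)}{a}$. Since the sequence $m\mapsto\binom{m}{a}$ is discretely convex for integers $m\ge 0$ (its second difference is $\binom{m}{a-2}\ge 0$ for $a\ge 2$), Jensen's inequality yields $\sum_v\binom{d(v)}{a}\ge n\binom{\bar d}{a}$, where $\bar d := 2|E(G)|/n$ is the average degree and $\binom{\bar d}{a}$ is interpreted as the polynomial $\frac{\bar d(\bar d-1)\cdots(\bar d-a+1)}{a!}$; this is legitimate provided $\bar d\ge a-1$, so that all factors in the product are nonnegative. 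The degenerate regime $\bar d<a-1$ gives $|E(G)|<\tfrac{1}{2}(a-1)n$ directly, already below the claimed bound since the first summand $\tfrac{1}{2}(a-1)^{1/a}n^{2-1/a}$ is nonnegative.

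Finally I would combine the two bounds algebraically. The inequality $n\binom{\bar d}{a}\le (a-1)\binom{n}{a}$ simplifies, after cancelling $a!$ and dividing by $n$, to $\bar d(\bar d-1)\cdots(\bar d-a+1)\le (a-1)(n-1)(n-2)\cdots(n-a+1)\le (a-1)n^{a-1}$. Using $(\bar d-a+1)^a\le \bar d(\bar d-1)\cdots(\bar d-a+1)$ (valid for $\bar d\ge a-1$) and extracting an $a$-th root, one obtains $\bar d\le (a-1)^{1/a}\,n^{1-1/a}+(a-1)$. Multiplying by $n/2$ gives $|E(G)|\le \tfrac{1}{2}(a-1)^{1/a}\,n^{2-1/a}+\tfrac{1}{2}(a-1)n$, as required. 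The only mildly delicate point is justifying convexity so that Jensen's inequality applies in the lower-bound step; everything else is routine polynomial manipulation.
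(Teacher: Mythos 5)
The paper does not prove Theorem~\ref{thm:turan}; it cites it from the literature and uses it as a black box, so there is no in-paper proof to compare against. Your argument is a correct and complete rendition of the classical K\H{o}v\'ari--S\'os--Tur\'an double-counting proof: you count pairs $(v,S)$ with $S\subseteq N(v)$ and $|S|=a$, bound the count from above by $(a-1)\binom{n}{a}$ using the $K_{a,a}$-free hypothesis (correctly noting that the $a$ common neighbours are automatically disjoint from $S$ in a simple graph), bound it from below by $n\binom{\bar d}{a}$ via convexity of $x\mapsto\binom{x}{a}$, and then extract the stated bound by routine manipulation. You also handle the two points that careless write-ups tend to skip: the degenerate regime $\bar d<a-1$, and the justification that the real-valued extension of $\binom{x}{a}$ being used in Jensen's inequality is indeed convex on the relevant range.
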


\begin{proof}[Proof of Theorem~\ref{thm:gapcliqe2pih}]

By plugging in $n=k, a = \frac{1}{2}\log k$ in Theorem \ref{thm:turan}, for sufficiently large $k$, $\frac{1}{2}(a-1)^{\frac{1}{a}}n^{2-\frac{1}{a}}+\frac{1}{2}(a-1)n$ is no more than $\varepsilon' k^2$ for some constant $0<\varepsilon'<1$. Theorem \ref{thm:turan} and  Lemma~\ref{lem:logk} imply that if {\sc Gap-biclique($k$, $\varepsilon k$)} does not admit $f(k) \cdot n^{O\left(\frac{k}{\log k}\right)}$-time algorithms, then no FPT algorithm can distinguish the following cases for a \textsc{$k$-Biclique} instance $G=(U,W,E)$ with
\begin{itemize}
    \item \textit{(Completeness.)} there exists $u_1 \in U_1,\ldots,u_k \in U_k, w_1 \in W_1, \ldots w_k \in W_k$ such that they form a biclique $K_{k,k}$.
    \item \textit{(Soundness.)} for all $u_1 \in U_1,\ldots,u_k \in U_k, w_1 \in W_1, \ldots w_k \in W_k$, the vertex set $\{u_1,\ldots,u_k,v_1,\ldots,v_k\}$ induce a subgraph with at most $\varepsilon' k^2$ edges for some constant $0<\varepsilon'<1$.
\end{itemize}
We link all pairs of vertices which are on the same side but not in the same group. In the completeness case, we can find $2k$ vertices from distinct groups such that they form a clique and thus the number of edges induced is $\binom{2k}{2}$. In the soundness case, the number of edges induced by $2k$ vertices from distinct groups is at most $\varepsilon' k^2 + 2\binom{k}{2} < \varepsilon'' \binom{2k}{2}$ for some $0<\varepsilon''<1$.

At last, by Lemma~\ref{lem:clique2biclique}, if {\sc Gap-clique($k$, $\varepsilon k$)} does not admit $f(k) \cdot n^{O\left(\frac{k}{\log k}\right)}$-time algorithms for some $0 < \varepsilon < 1$, then neither does {\sc Gap-biclique($k$, $\frac{1 + \varepsilon}{2}k$)}, hence no FPT-algorithm can approximate \textsc{Densest $k$-Subgraph} to an $\varepsilon''$ factor.
\end{proof}

\section{Conclusion}
\label{sec:conclusion}

In this paper,  we provide a tighter {\sf ETH}-lower bound for constant gap \textsc{$k$-Clique} by replacing the Hardamard code used in~\cite{Lin21} by the Reed-Muller Code with degree-2 polynomials. We use gap amplification techniques by expander graphs to rule out $k^{o(1)}$-ratio FPT-approximation algorithms for {\sc $k$-Clique} under {\sf ETH}. We also study the relationship between the constant gap \textsc{$k$-Clique} problem and \textsf{PIH}. We show that almost tight lower bounds for constant gap \textsc{$k$-Clique} can imply \textsf{PIH}.

A natural open question is whether we can derive such a lower bound for constant gap {\sc $k$-Clique} under {\sf ETH}. Formally, it is stated as follows:

\noindent\textbf{Question 1.} Assuming {\sf ETH}, does constant gap {\sc $k$-Clique} admit an algorithm in $f(k)\cdot n^{O(\frac{k}{\log k})}$ time?

It is also worth noting that assuming {\sf ETH}, there is no $2^{o(n)}$-time algorithm for non-parameterized
{\sc Max-Clique} problem on $n$-vertex graphs. 
Hence, it is natural to ask whether our technique can be
analogously applied to non-parameterized {\sc Max-Clique} to obtain tight lower bounds:

\noindent\textbf{Question 2.} Assuming {\sf ETH}, does constant gap {\sc Max-Clique} admit an algorithm in $2^{o(n)}$ time?

\bibliographystyle{alpha}
\bibliography{main}

\appendix
\section{Vector-valued Low-degree Test}
\label{sec:ldt}

In this subsection, we prove Lemma \ref{LDTVector}.

We first present the condition of a general vector-valued function being a degree-$d$ polynomial.

\begin{lemma}
\label{lem:localtest}
    Let $|\mathbb{F}| > 2d$. The function $f:  \mathbb{F}^m\to \mathbb{F}^{\ell}$ is of degree-$d$ if for every $\vec{x}, \vec{h}\in \mathbb{F}^m$ there exists a degree-$d$ univariate polynomial $p_{\vec{x}, \vec{h}}: \mathbb{F}\to \mathbb{F}^{\ell}$ such that $p_{\vec{x}, \vec{h}}(i) = f(\vec{x} + i\vec{h})$ for every $i\in \mathbb{F}$.
\end{lemma}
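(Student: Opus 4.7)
The plan is to reduce to the scalar case and then carry out a two-stage degree argument. Since the hypothesis on $f = (f_1, \ldots, f_\ell)$ applies componentwise---if $p_{\vec{x}, \vec{h}}$ is a vector-valued polynomial of degree $\leq d$, so is each of its coordinates---it suffices to prove the scalar statement: any $f: \mathbb{F}^m \to \mathbb{F}$ whose restriction to every line agrees as a function with a univariate polynomial of degree $\leq d$ must itself agree with a multivariate polynomial of total degree $\leq d$, provided $|\mathbb{F}| > 2d$.

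The first stage is an induction on $m$ producing \emph{some} polynomial $P$ of total degree $\leq 2d$ that coincides with $f$ as a function; the degree bound will be tightened in the second stage. The base case $m=1$ is the hypothesis itself. For the inductive step, each slice $g_t(\vec{x}') = f(t, \vec{x}')$ over the remaining $m-1$ coordinates inherits the line hypothesis, so by the inductive hypothesis agrees with a polynomial $G_t$ of total degree $\leq d$. Fixing $\vec{x}'$ and varying $t$, the line hypothesis along the first axis says that $t \mapsto f(t, \vec{x}')$ equals a univariate polynomial of degree $\leq d$; since $|\mathbb{F}| > d$, Lagrange interpolation at $t = 0, 1, \ldots, d$ yields fixed scalars $\lambda_{i,e} \in \mathbb{F}$ such that $f(t, \vec{x}') = \sum_{e=0}^{d} t^e \sum_{i=0}^{d} \lambda_{i,e} g_i(\vec{x}')$. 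Replacing each $g_i$ by its polynomial representative $G_i$ gives a polynomial $P(t, \vec{x}') = \sum_{e=0}^{d} t^e \sum_{i=0}^{d} \lambda_{i,e} G_i(\vec{x}')$ of total degree $\leq 2d$ that agrees with $f$ at every point of $\mathbb{F}^m$.

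The second stage trims the total degree of $P$ from $2d$ down to $d$ using the line hypothesis on lines through the origin. Decompose $P = P_0 + P_1 + \cdots + P_{2d}$ into homogeneous components, so that $P(s\vec{h}) = \sum_{e=0}^{2d} s^e P_e(\vec{h})$ as a formal polynomial in $s$ for any $\vec{h} \in \mathbb{F}^m$. As a function of $s \in \mathbb{F}$ this equals $f(s\vec{h})$, which by hypothesis agrees with a polynomial in $s$ of degree $\leq d$; since $|\mathbb{F}| > 2d$, two polynomials of degree $\leq 2d$ in $\mathbb{F}[s]$ that agree as functions on $\mathbb{F}$ must be identical, so the formal $s$-degree of $P(s\vec{h})$ is already $\leq d$. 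Therefore $P_e(\vec{h}) = 0$ for every $e > d$ and every $\vec{h} \in \mathbb{F}^m$. Each such $P_e$ is a polynomial of total degree $e \leq 2d < |\mathbb{F}|$, so Schwartz-Zippel forces it to be the zero polynomial. Thus $P = P_0 + \cdots + P_d$ has total degree $\leq d$, as desired.

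The main obstacle I anticipate is the step converting ``$P_e$ vanishes as a function on $\mathbb{F}^m$'' into ``$P_e$ is the zero polynomial.'' This is exactly where the hypothesis $|\mathbb{F}| > 2d$ (rather than the weaker $|\mathbb{F}| > d$) is needed: both the upgrade from functional to formal equality of univariate polynomials along a line, and the final Schwartz-Zippel application to a possibly degree-$2d$ homogeneous component, require the field to strictly exceed the worst-case intermediate total degree $2d$ that the inductive construction produces.
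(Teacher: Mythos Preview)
Your proposal is correct. Both you and the paper reduce to the scalar case $\ell=1$ by observing that the hypothesis and the conclusion decompose coordinatewise; the paper then simply cites \cite{RS96} for that case, whereas you supply a self-contained two-stage argument (interpolate inductively to a degree-$\le 2d$ polynomial, then trim via lines through the origin and Schwartz--Zippel), which is essentially the standard proof underlying that citation.
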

\begin{proof}
    The cases when $\ell = 1$ is proved in \cite{RS96}. The general case follows directly from the fact that a vector-valued function $f = (f_1, f_2, \ldots, f_{\ell})$ is of degree-$d$ if and only if each $f_i$ is of degree-$d$.
\end{proof}

For $i\in \{0, 1, \ldots, d+1\}$, define $\alpha_i = (-1)^{i+1} \binom{d+1}{i}$.
Let $\mathbf{1}\in\mathbb{F}$ be the unity element. We view each number $i\in\mathbb{N}$ as an element $i\cdot\mathbf{1}\in\mathbb{F}$.
From the folklore relationship between a polynomial's degree and the number of times it needs to take difference on its values to make them all zero, we have the following lemma.

\begin{lemma}[\cite{RS96}, Folklore]
\label{lem:interpolation}
    A univariate polynomial $f: \mathbb{F}\to \mathbb{F}^{\ell}$ has degree $d < |\mathbb{F}|$ if and only if for every $e\in \mathbb{F}$ it holds that \[\sum_{i=0}^{d+1} \alpha_i f(e + i) = \vec{0}.\]
\end{lemma}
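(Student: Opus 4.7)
The plan is to reduce to the scalar case $\ell = 1$ and then invoke the standard finite-difference characterization of polynomial degree. The reduction is immediate: for $f = (f_1,\ldots,f_\ell)$, the vector identity $\sum_{i=0}^{d+1} \alpha_i f(e+i) = \vec{0}$ is equivalent to the scalar identity $\sum_{i=0}^{d+1} \alpha_i f_j(e+i) = 0$ holding for every coordinate $j \in [\ell]$, and $f$ is a polynomial of degree at most $d$ precisely when every $f_j$ is. So it suffices to handle $\ell = 1$.

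For the scalar case, observe that, up to a global sign, $\sum_{i=0}^{d+1} \alpha_i f(e+i)$ is the $(d{+}1)$-st forward difference $\Delta^{d+1} f(e)$, where $\Delta g(x) = g(x+1) - g(x)$. The ``only if'' direction then follows from the routine fact that $\Delta$ lowers the degree of any polynomial by exactly one, so $\Delta^{d+1}$ annihilates every polynomial of degree at most $d$.

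For the ``if'' direction, the key observation is that $\alpha_{d+1} = (-1)^{d+2}\binom{d+1}{d+1} = \pm 1$ is invertible in $\mathbb{F}$. Hence the relation can be solved for $f(e+d+1)$ as a fixed linear combination of $f(e), f(e+1), \ldots, f(e+d)$. Since $\mathbb{F}$ is a prime field (as assumed in Section \ref{sec:pre}), iterating $e \mapsto e+1$ starting from any element enumerates all of $\mathbb{F}$, so $f$ is determined on all of $\mathbb{F}$ by its values on any $d+1$ consecutive points. Now interpolate through $f(0), f(1),\ldots, f(d)$ to obtain the unique polynomial $p$ of degree at most $d$ agreeing with $f$ there (this requires $|\mathbb{F}| > d$, which is given). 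By the ``only if'' direction already proved, $p$ satisfies the same recurrence, so $p$ and $f$ agree on the initial window $\{0,1,\ldots,d\}$ and thereafter at every iterate, hence on all of $\mathbb{F}$. Therefore $f$ coincides with $p$ as a function and is a polynomial of degree at most $d$.

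There is no real obstacle here; the statement is a textbook finite-difference identity. The only subtlety worth flagging is the use of the prime-field assumption in the ``if'' direction, needed so that the successor operation $e \mapsto e+1$ eventually visits every element of $\mathbb{F}$. If one wanted to avoid relying on this, one could instead verify the recurrence at any $|\mathbb{F}| - d$ well-chosen points and reach the same conclusion by dimension count, but the iterative argument above is the cleanest.
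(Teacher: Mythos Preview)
Your proof is correct and is exactly the finite-difference argument the paper alludes to: the paper does not actually prove this lemma but cites it as folklore from \cite{RS96}, noting only that it follows ``from the folklore relationship between a polynomial's degree and the number of times it needs to take difference on its values to make them all zero.'' Your write-up simply unpacks that remark, including the reduction to $\ell=1$ and the prime-field observation needed for the ``if'' direction.
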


By combing Lemma \ref{lem:localtest} and \ref{lem:interpolation}, we have the following theorem.
\begin{theorem}
\label{thm:LDTcompleteness}
    Let $|\mathbb{F}| > 2d$. A function $f: \mathbb{F}^m\to \mathbb{F}^{\ell}$ is of degree-$d$ if and only if for every $\vec{x}, \vec{h}\in \mathbb{F}^m$ it holds that \[\sum_{i=0}^{d+1} \alpha_i f(\vec{x} + i\vec{h}) = \vec{0}.\]
\end{theorem}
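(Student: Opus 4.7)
The plan is to deduce Theorem \ref{thm:LDTcompleteness} by composing Lemma \ref{lem:localtest} with Lemma \ref{lem:interpolation}. The only intermediate object I will need is, for each fixed $\vec x, \vec h \in \mathbb F^m$, the univariate restriction $g_{\vec x, \vec h}: \mathbb F \to \mathbb F^\ell$ defined by $g_{\vec x, \vec h}(t) = f(\vec x + t \vec h)$. The hypothesis $|\mathbb F| > 2d$ is strong enough to invoke both lemmas ($d < |\mathbb F|$ suffices for Lemma \ref{lem:interpolation}, and $|\mathbb F| > 2d$ is assumed in Lemma \ref{lem:localtest}).

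For the forward direction, I would argue as follows. Assume $f$ is of degree $d$. Then writing $f = (f_1, \ldots, f_\ell)$ with each $f_s$ a degree-$d$ polynomial in $m$ variables, each restriction $g_s(t) := f_s(\vec x + t \vec h)$ is a degree-$\le d$ univariate polynomial in $t$; hence $g_{\vec x, \vec h}$ is a degree-$\le d$ univariate polynomial $\mathbb F \to \mathbb F^\ell$. Applying Lemma \ref{lem:interpolation} to $g_{\vec x, \vec h}$ at $e = 0$ gives $\sum_{i=0}^{d+1} \alpha_i g_{\vec x, \vec h}(i) = \vec 0$, which is exactly $\sum_{i=0}^{d+1} \alpha_i f(\vec x + i\vec h) = \vec 0$.

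For the backward direction, assume the vanishing identity holds for all $\vec x, \vec h$. Fix $\vec x, \vec h$ and consider $g_{\vec x, \vec h}$. For any $e \in \mathbb F$, applying the hypothesis to the pair $(\vec x + e\vec h, \vec h)$ yields
\[
\sum_{i=0}^{d+1} \alpha_i f\bigl(\vec x + e\vec h + i\vec h\bigr) = \sum_{i=0}^{d+1} \alpha_i g_{\vec x, \vec h}(e + i) = \vec 0.
\]
This is precisely the condition of Lemma \ref{lem:interpolation}, so $g_{\vec x, \vec h}$ is a univariate polynomial of degree $\le d$. In particular, there is a degree-$d$ polynomial $p_{\vec x, \vec h}$ agreeing with $f(\vec x + i \vec h)$ for every $i \in \mathbb F$. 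Since this holds for arbitrary $\vec x, \vec h$, Lemma \ref{lem:localtest} concludes that $f$ is of degree $d$.

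The argument is essentially a two-line composition, so I do not expect any real obstacle; the main point is simply to recognize that the shift $\vec x \mapsto \vec x + e\vec h$ converts the ``for every $e$'' quantifier inside Lemma \ref{lem:interpolation} into the ``for every $\vec x$'' quantifier in the theorem statement, so a single instance of the hypothesis (at shifted base point) covers each value of $e$ needed to certify $g_{\vec x, \vec h}$ as a degree-$d$ polynomial via Lemma \ref{lem:interpolation}.
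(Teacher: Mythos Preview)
Your proposal is correct and is exactly the argument the paper has in mind: the paper's proof consists of the single sentence ``By combining Lemma \ref{lem:localtest} and \ref{lem:interpolation}, we have the following theorem,'' and your write-up simply unpacks that composition via the restriction $g_{\vec x,\vec h}(t)=f(\vec x+t\vec h)$ together with the base-point shift $\vec x\mapsto \vec x+e\vec h$.
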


We define our tester as follows. It selects $\vec{x}, \vec{h}\in \mathbb{F}^m$ uniformly at random, and query $f$ at points $\vec{x}, \vec{x} + \vec{h}, \vec{x} + 2\vec{h}, \ldots, \vec{x} + (d+1)\vec{h}$. It accepts if and only if \[\sum_{i=0}^{d+1} \alpha_i f(\vec{x} + i\vec{h}) = \vec{0}.\]

The proof of completeness follows directly from Theorem \ref{thm:LDTcompleteness}. For the soundness case, we have the following theorem.

\begin{theorem}
\label{thm:LDTsoundness}
    Let $\delta_0 = 1 / (d+2)^2$. If $f$ is not $\delta$-close to degree-$d$ polynomials, the above tester rejects with probability at least $\min(\delta, \delta_0) / 2$.
\end{theorem}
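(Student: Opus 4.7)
Proof proposal. The plan is to adapt the Rubinfeld--Sudan self-correction argument to vector-valued $f$, with Lemma~\ref{thm:LDTcompleteness} playing the role of the local characterization of $\mathcal{F}$. Let $\eta$ denote the tester's rejection probability; since the conclusion is trivial when $\eta \ge \delta_0/2$, I assume $\eta < \delta_0/2 = \frac{1}{2(d+2)^2}$. Under this assumption I will prove that $f$ is $2\eta$-close to $\mathcal{F}$, which is enough: if $f$ is not $\delta$-close then $2\eta \ge \delta$, hence $\eta \ge \delta/2 \ge \min(\delta,\delta_0)/2$.

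Since $\alpha_0 = -1$, the tester accepts $(\vec{x},\vec{h})$ exactly when $f(\vec{x}) = v_{\vec{h}}(\vec{x})$, where $v_{\vec{h}}(\vec{x}) := \sum_{i=1}^{d+1} \alpha_i \, f(\vec{x} + i\vec{h})$. Define the self-corrected function $g : \mathbb{F}^m \to \mathbb{F}^\ell$ by letting $g(\vec{x})$ be a plurality value of $v_{\vec{h}}(\vec{x})$ over uniform $\vec{h}$, with ties broken arbitrarily. The argument has three steps: (a) a pointwise concentration bound $\Pr_{\vec{h}}[v_{\vec{h}}(\vec{x}) \ne g(\vec{x})] \le 2\eta$ for every $\vec{x}$; (b) $f$ is $2\eta$-close to $g$; and (c) $g \in \mathcal{F}$.

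Step~(a) is the standard pairwise-independence trick, adapted to vectors. For independent uniform $\vec{h}_1, \vec{h}_2$, consider the $(d+2)\times(d+2)$ array $\{f(\vec{x} + i\vec{h}_1 + j\vec{h}_2)\}_{0 \le i,j \le d+1}$, viewed simultaneously as $d+2$ lines in direction $\vec{h}_1$ (one per fixed $j$) and as $d+2$ lines in direction $\vec{h}_2$ (one per fixed $i$). Since most rows and most columns yield accepting tests, a linear-combination identity across the array forces $v_{\vec{h}_1}(\vec{x}) = v_{\vec{h}_2}(\vec{x})$ with high probability, from which the pointwise plurality bound follows by a collision argument. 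Nothing in this computation is specific to the scalar case: equality is tested as vector equality throughout. Step~(b) is then a short calculation: writing $1 - \eta = \mathbb{E}_{\vec{x}}\!\left[\Pr_{\vec{h}}[v_{\vec{h}}(\vec{x}) = f(\vec{x})]\right]$ and noting that on $\vec{x}$'s with $f(\vec{x}) \ne g(\vec{x})$ the inner probability is at most $2\eta$ by~(a), one solves to get $\Pr_{\vec{x}}[f(\vec{x}) \ne g(\vec{x})] \le \eta/(1 - 2\eta) \le 2\eta$.

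Step~(c) applies Lemma~\ref{thm:LDTcompleteness}: it suffices to verify $\sum_{i=0}^{d+1} \alpha_i\, g(\vec{x} + i\vec{h}) = \vec{0}$ for every fixed $\vec{x},\vec{h}$. Pick a fresh random direction $\vec{y}$ and work on the intersection of (i) $v_{\vec{y}}(\vec{x} + i\vec{h}) = g(\vec{x} + i\vec{h})$ for each $i \in \{0,\ldots,d+1\}$, and (ii) the row-test $\sum_{i=0}^{d+1} \alpha_i f(\vec{x} + i\vec{h} + j\vec{y}) = \vec{0}$ for each $j \in \{1,\ldots,d+1\}$; on this event, a double-sum swap collapses $\sum_{i,\,j \ge 1} \alpha_i\alpha_j f(\vec{x} + i\vec{h} + j\vec{y})$ to $\sum_i \alpha_i g(\vec{x} + i\vec{h})$ on one hand and to $\vec{0}$ on the other, forcing the desired identity. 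A $(d+2)+(d+1)$-term union bound leaves positive probability precisely when $\eta < 1/(2(d+2)^2) = \delta_0/2$, which matches the assumed threshold. The main obstacle is that the row tests in~(ii) involve a fixed direction $\vec{h}$, and the tester's rejection probability is $\eta$ only averaged over $\vec{h}$; the standard fix is to introduce a second auxiliary direction so that every invoked test has both base point and direction uniformly distributed, which is a direct transcription of the scalar RS proof. Once this adjustment is in place, the calculation goes through and the constants $\delta_0 = 1/(d+2)^2$ and $\min(\delta,\delta_0)/2$ fall out of the bookkeeping.
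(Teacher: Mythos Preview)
Your proposal is essentially the same Rubinfeld--Sudan self-correction argument the paper uses, and it is correct in outline. Two small points worth noting. First, the bound you state in Step~(a) should be $\Pr_{\vec{h}}[v_{\vec{h}}(\vec{x}) \ne g(\vec{x})] \le 2(d+1)\eta$, not $2\eta$: the collision argument needs a union bound over the $d+1$ row tests and $d+1$ column tests in the array, each failing with probability at most~$\eta$. Your own threshold computation in Step~(c) implicitly uses the correct $(d+1)$ factor (otherwise $(d+2)+(d+1)$ terms of size $O(\eta)$ would only require $\eta \lesssim 1/d$, not $1/(d+2)^2$), so this is just a slip in the statement. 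Second, your Step~(b) routes through~(a), whereas the paper gets the $2\eta$-closeness of $f$ to $g$ more directly by Markov: the set of $\vec{x}$ with $\Pr_{\vec{h}}[v_{\vec{h}}(\vec{x}) \ne f(\vec{x})] > 1/2$ has measure at most $2\eta$, and for all other $\vec{x}$ the plurality is $f(\vec{x})$ itself. Both routes work; the paper's is slightly cleaner since it avoids invoking~(a) with its $(d+1)$ overhead.
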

\begin{proof}

    It suffices to prove that: if the tester rejects with probability $\rho$ where $\rho < \delta_0 / 2$, then $f$ is $2\rho$-close to degree-$d$ polynomials.
    
    Now fix any function $f$ such that it passes the test with probability $1-
    \rho$, that is,\begin{equation}
    \label{equ:acceptprob}
        \Pr_{\vec{x}, \vec{h}\in \mathbb{F}^m}\left[\sum_{i=0}^{d+1} \alpha_i f(\vec{x} + i\vec{h}) = \vec{0}\right] \ge 1 - \rho
    \end{equation} where $\rho < \delta_0 / 2$. Define $\mathsf{MAJ}_{e\in S}\{v_e\}$ to be the most frequent value $v_e$ when $e\in S$ with ties broken arbitrarily, and \[g(\vec{x}) = \mathsf{MAJ}_{\vec{h}\in \mathbb{F}^m} \left\{ \sum_{i=1}^{d+1} \alpha_i f(\vec{x} + i\vec{h})\right\}= f(\vec{x}) + \mathsf{MAJ}_{\vec{h}\in \mathbb{F}^m} \left\{\sum_{i=0}^{d+1} \alpha_i f(\vec{x} + i\vec{h})\right\}.\]
    
    Consider the set of elements $\vec{x}$ such that \[\Pr_{\vec{h}\in \mathbb{F}^m} \left[\sum_{i=0}^{d+1} \alpha_i f(\vec{x} + i\vec{h}) = \vec{0}\right]\le 0.5.\] 
    By Inequality \ref{equ:acceptprob} and Markov's Inequality, the fraction of such elements is no more than $2\rho$. For remaining $\vec x$'s, we have \[\Pr_{\vec{h}\in \mathbb{F}^m} \left[\sum_{i=0}^{d+1} \alpha_i f(\vec{x} + i\vec{h}) = \vec{0}\right] > 0.5,\] which implies $g(\vec{x}) = f(\vec{x})$. Thus $f$ is $2\rho$-close to $g$ and it only remains to prove $g$ is of degree-$d$.
    
    We first claim that for every $\vec{x}$, $g(\vec{x})$ is the interpolation of $f(\vec{x} + i\vec{h})$ at $\vec{x}$ with high probability.
    Formally, we have the following claim.
    \begin{claim}
    \label{claim:probforx}
        For every $\vec{x}\in \mathbb{F}^m$, it holds that \[\Pr_{\vec{h}\in \mathbb{F}^m}\left[g(\vec{x}) = \sum_{j=1}^{d+1} \alpha_j f(\vec{x} + j\vec{h_2})\right]\ge 1 - 2(d+1)\rho.\]
    \end{claim}
    \begin{proof}

        Select $\vec{h}_1, \vec{h}_2\in \mathbb{F}^m$ independently and uniformly at random. Then for every $i, j\in [d+1]$, $\vec{x} + i\vec{h}_1$ and $\vec{h}_2$ (or $\vec{x} + j\vec{h}_2$ and $\vec{h}_1$) are also independently and uniformly distributed. It follows from Inequality \eqref{equ:acceptprob} that \[\Pr_{\vec{h}_1, \vec{h}_2\in \mathbb{F}^m} \left[f(\vec{x} + i\vec{h}_1) = \sum_{j=1}^{d+1} \alpha_j f((\vec{x} + i\vec{h}_1) + j\vec{h}_2)\right] \geq 1 - \rho\] and \[\Pr_{\vec{h}_1, \vec{h}_2\in \mathbb{F}^m} \left[f(\vec{x} + j\vec{h}_2) = \sum_{i=1}^{d+1} \alpha_i f((\vec{x} + j\vec{h}_2) + i\vec{h}_1)\right] \geq 1 - \rho.\] By a union bound over all $i,j\in [d+1]$, we have \[\Pr_{\vec{h}_1, \vec{h}_2\in \mathbb{F}^m} \left[\sum_{i=1}^{d+1} \alpha_i f(\vec{x} + i\vec{h}_1) = \sum_{i=1}^{d+1} \sum_{j=1}^{d+1} \alpha_i\alpha_j f(\vec{x} + i\vec{h}_1 + j\vec{h}_2)\right]\ge 1 - (d+1)\rho\] and \[\Pr_{\vec{h}_1, \vec{h}_2\in \mathbb{F}^m}\left[\sum_{j=1}^{d+1} \alpha_j f(\vec{x} + j\vec{h}_2) = \sum_{i=1}^{d+1} \sum_{j=1}^{d+1} \alpha_i\alpha_j f(\vec{x} + i\vec{h}_1 + j\vec{h}_2)\right]\ge 1 - (d+1)\rho.\] Taking union bound again we will get
        \begin{equation}
        \label{ineq:p2}
            \Pr_{\vec{h}_1, \vec{h}_2\in \mathbb{F}^m} \left[\sum_{i=1}^{d+1} \alpha_i f(\vec{x} + i\vec{h}_1) = \sum_{j=1}^{d+1} \alpha_j f(\vec{x} + j\vec{h}_2)\right]\ge 1 - 2(d+1)\rho.
        \end{equation}
        
        Fix $\vec x$, let $Z = \sum_{i=1}^{d+1} \alpha_i f(\vec{x} + i\vec{h})$ be a random variable where $\vec{h}$ is drawn from $\mathbb{F}^m$ uniformly. Suppose it takes value $Z_i$ with probability $p_i$ where $p_1\ge p_2\ge p_3\ge \ldots$. Note that $g(\vec{x})=Z_1$ and the claim is to give a lower bound on $p_1$ since $g(\vec{x})$ is the most frequent value of $Z$.
        
        The Inequality \eqref{ineq:p2} gives a lower bound on $\sum_{i} p_i^2$, and the claim follows from the fact that $\sum_i p_i^2\le \sum_i p_ip_1 = p_1$.
    \end{proof}
    
    Now fix $\vec{x}, \vec{h}\in \mathbb{F}^m$. Our goal is to prove $\sum_{i=0}^{d+1} \alpha_i g(\vec{x} + i\vec{h}) = \vec{0}.$ Select $\vec{h}_1, \vec{h}_2\in \mathbb{F}^m$ independently and uniformly at random. Then for every $i\in \{0, 1, \ldots, d+1\}$, $\vec{h}_1 + i\vec{h}_2$ is also uniformly distributed. According to Claim \ref{claim:probforx}, for every $i\in \{0, 1, \ldots, d+1\}$ we have \[\Pr_{\vec{h}_1, \vec{h}_2\in \mathbb{F}^m} \left[g(\vec{x} + i\vec{h}) = \sum_{j=1}^{d+1} \alpha_j f((\vec{x} + i\vec{h}) + j(\vec{h}_1 + i\vec{h}_2))\right]\ge 1-2(d+1)\rho.\]
    However, since for every $j\in [d+1]$, $\vec{x} + j\vec{h}_1$ and $\vec{h} + j\vec{h}_2$ are uniformly and independently distributed, according to Inequality \ref{equ:acceptprob}, we have \[\Pr_{\vec{h}_1, \vec{h}_2\in \mathbb{F}^m} \left[\sum_{i=0}^{d+1} \alpha_i f((\vec{x} + j\vec{h}_1) + i(\vec{h} + j\vec{h}_2)) = \vec{0}\right] \geq 1 - \rho\] By a union bound over $j\in [d+1]$, we get \[\Pr_{\vec{h}_1, \vec{h}_2\in \mathbb{F}^m} \left[\sum_{j=1}^{d+1} \alpha_j \sum_{i=0}^{d+1} \alpha_i f((\vec{x} + i\vec{h}) + j(\vec{h}_1 + i\vec{h}_2)) = \vec{0}\right]\ge 1 - (d+1)\rho\] and hence by a union bound over $i\in \{0, 1, \ldots, d+1\}$ we get
    \begin{align*}
        &\Pr_{\vec{h}_1, \vec{h}_2\in \mathbb{F}^m} \left[\sum_{i=0}^{d+1} \alpha_i g(\vec{x} + i\vec{h}) = \sum_{i=0}^{d+1} \alpha_i  \sum_{j=1}^{d+1} \alpha_j f((\vec{x} + i\vec{h}) + j(\vec{h}_1 + i\vec{h}_2)) = \vec{0}\right] \\
        &\ge 1 - (d+2)\cdot 2(d+1)\rho - (d+1)\rho \\
        & > 0
    \end{align*}
    where the last inequality comes from the fact $\rho\le 1/(2(d+2)^2)$. Since $\sum_{i=0}^{d+1} \alpha_i g(\vec{x} + i\vec{h})$ is independent of $\vec{h}_1$ and $\vec{h}_2$, it follows that \[\sum_{i=0}^{d+1} \alpha_i g(\vec{x} + i\vec{h}) = \vec{0}\] holds for all $\vec{x}, \vec{h}\in \mathbb{F}^m$. As a consequence, $f$ is $2\rho$-close to $g$, which is of degree-$d$, and our theorem is proved.
\end{proof}

Lemma \ref{LDTVector} directly follows from Theorem \ref{thm:LDTcompleteness} and \ref{thm:LDTsoundness}. 

\section{Detailed Proof of Lemma~\ref{lem:disperser}}
\label{appendix:disperser}

\begin{proof}[Proof of Lemma~\ref{lem:disperser}]

        Let $I_1 \ldots I_k$ denote independently and randomly chosen $\ell$-subsets of $[m]$. Consider any different indexes $1 \le i_1<\ldots<i_r \le k$, the probability that $|I_{i_1}\cup \ldots \cup I_{i_r}|<(1-\varepsilon) m$ can be bounded by the probability that there exists a subset $S$ of size $(1-\varepsilon)m$ such that $I_{i_1} \cup \ldots \cup I_{i_r} \subseteq S$. Take union bound over all $S$ and then over all possible $1 \le i_1<\ldots< i_r \le k$, we have
        \begin{align*}
    	&\ \Pr[\exists 1  \le i_1 < \ldots < i_r \le k, |I_{i_1} \cup \ldots \cup I_{i_r}| <(1-\varepsilon)m] \\ 
    	\le &\  k^r \cdot \Pr[\text{for fixed $1 \le i_1< \ldots < i_r \le k,|I_{i_1} \cup \ldots \cup I_{i_r}|<(1-\varepsilon)m$}] \\
    	\le &\ e^m \cdot \Pr[\text{for fixed $1 \le i_1< \ldots < i_r \le k, \exists S$, s.t. $|S| \le (1-\varepsilon) m$ and $I_{i_1} \cup \ldots \cup I_{i_r}\subseteq S$}] \\
    	\le &\ e^m \cdot 2^m \cdot \left(\frac{\binom{(1-\varepsilon)m}{\ell}}{\binom{m}{\ell}}\right)^r \\
    	\le &\ e^{2m} \cdot (1-\varepsilon)^{\ell r} \\
    	\le &\ e^{-m}.
        \end{align*}
\end{proof}

\end{document}